\newcommand{\mb}[1]{\ensuremath{\boldsymbol{#1}}}
\newcommand{\mc}[1]{\mathcal{#1}}
\def\opt{\textsc{OPT}}
\def\alg{\textsc{ALG}}
\def\mbb#1{\mathbb{#1}}
\begin{document}

\title{\Large When Stochastic Rewards Reduce to Deterministic Rewards in Online Bipartite Matching}
\author{Rajan Udwani \thanks{Department of Industrial Engineering and Operations Research, University of California, Berkeley. Part of this work was supported by a Google Research Scholar award.}}

\date{}

\maketitle

% Copyright Statement
% When submitting your final paper to a SIAM proceedings, it is requested that you include
% the appropriate copyright in the footer of the paper.  The copyright added should be
% consistent with the copyright selected on the copyright form submitted with the paper.
% Please note that "20XX" should be changed to the year of the meeting.

% Default Copyright Statement
%\fancyfoot[R]{\scriptsize{Copyright \textcopyright\ 2024 by SIAM\\
%Unauthorized reproduction of this article is prohibited}}

% Depending on which copyright you agree to when you sign the copyright form, the copyright
% can be changed to one of the following after commenting out the default copyright statement
% above.

%\fancyfoot[R]{\scriptsize{Copyright \textcopyright\ 20XX\\
%Copyright for this paper is retained by authors}}

%\fancyfoot[R]{\scriptsize{Copyright \textcopyright\ 20XX\\
%Copyright retained by principal author's organization}}

%\pagenumbering{arabic}
%\setcounter{page}{1}%Leave this line commented out.

\begin{abstract} \small\baselineskip=9pt 	We study the problem of vertex-weighted online bipartite matching with stochastic rewards where matches may fail with some known probability and the decision maker has to adapt to the sequential realization of these outcomes. Recent works have studied several special cases of this problem and it was known that the (randomized) Perturbed Greedy algorithm due to Aggarwal et al.\ (SODA, 2011) achieves the best possible competitive ratio guarantee of $(1-1/e)$ in some cases. We give a simple proof of these results by reducing (special cases of) the stochastic rewards problem to the deterministic setting of online bipartite matching (Karp, Vazirani, Vazirani (STOC, 1990)). More broadly, our approach gives conditions under which it suffices to analyze the competitive ratio of an algorithm for the simpler setting of deterministic rewards in order to obtain a competitive ratio guarantee for stochastic rewards. The simplicity of our approach reveals that the Perturbed Greedy algorithm has a competitive ratio of $(1-1/e)$ even in certain settings with correlated rewards, where no results were previously known. Finally, we show that without any special assumptions, the Perturbed Greedy algorithm has a competitive ratio strictly less than $(1-1/e)$ for vertex-weighted online matching with stochastic rewards.\end{abstract}

\section{Introduction}
In online matching with stochastic rewards, we have a graph $G=(I,T,E)$ where the vertices in $I$, called \emph{ resources}, are known in advance. Vertices in $T$, called \emph{arrivals}, are sequentially revealed one at a time. When a vertex $t\in T$ arrives, every edge $(i,t)\in E$ incident on $t$ is revealed along with the corresponding probability of success $p_{it}$. %Without loss of generality (w.l.o.g.), the vertices in $T$ arrive in the order of their index. %and associated acceptance probabilities $p_{it}$. 
On every new arrival, we must make an irrevocable decision to match the arrival to any one of the available offline neighbors. If arrival $t$ is matched to resource $i$, the match succeeds with probability $p_{it}$ independent of all other events. If the match succeeds, we obtain a reward of $r_i$ (equal to the vertex weight) and resource $i$ is unavailable to future arrivals. If the match fails, arrival $t$ departs, resource $i$ is available to match to future arrivals and we get no reward. The objective is to maximize the expected reward summed over all arrivals. 
Let $\opt(G)$ denote the optimal achievable by an offline algorithm and let $\mc{A}(G)$ the expected value achieved by a (possibly randomized) online algorithm $\mc{A}$. We measure the worst case performance of $\mc{A}$ by evaluating the competitive ratio,  
\[\inf_{G}\frac{\mc{A}(G)}{\opt(G)}.\]
Our goal is to design an algorithm that has the highest possible competitive ratio. This can be quite challenging and a more relaxed goal is to find an algorithm that outperforms %It is often that case that a 
a na\"ive greedy algorithm that matches each arrival to the resource that maximizes expected reward (or show that no such algorithm exists). %(discussed later) is 0.5 competitive for a setting of online bipartite matching and the goal is to desiugn algorithms that outperform greedy. 
It can be shown that no online algorithm has competitive ratio better than $O(1)/|I|$ against an \emph{omniscient} offline algorithm that knows the realization of all edges\footnote{From \cite{stochrew}: Consider a single arrival with an edge to $n$ identical resources. Each edge has a probability of success $1/n$. A
	benchmark that knows realizations of rewards attains expected reward at least $(1-1/e)$. However, no online algorithm
	can achieve a reward better than $1/n$.}. Similar to related work \cite{negin,will, stochrew, brubach2, borodin2022prophet,reuse,ms}, we consider a natural alternative (non-omniscient) benchmark that %only knows the probabilities of edges that it includes in the matching. In particular, we focus on offline benchmark that: $(i)$ 
knows all edge probabilities and can match arrivals in \emph{any order} but does not know the realization of an edge until after it is included in the matching. The algorithm can adapt to edge realizations and we call this the \emph{non-anticipative} offline benchmark. Note that the benchmark can match each arrival at most once. We refer to \cite{stochrew} for a discussion on other possible benchmarks. %In particular, the offline benchmark: $(i)$ observers the success/failure of an edge only after it is included in the matching, and $(ii)$ can match each arrival to at most one resource. Removing either of these restrictions leads to the omniscient offline benchmark.

\textbf{Previous work:}  This problem and its generalizations as well as special cases have various applications ranging from online advertisement \cite{survey} to volunteer matching on non-profit platforms \cite{saban2}. The special case where all edges have success probability of 1 and vertex weights are identical corresponds to the classic online bipartite matching problem introduced by \cite{kvv}, who proposed and analyzed the Ranking algorithm that orders all resources in a random permutation and matches every arrival to the highest ranked neighboring resource that is available. They showed that Ranking attains the highest possible competitive ratio of $(1-1/e)$ for online bipartite matching\footnote{See \cite{baum,goel2,devanur,sosa} for simplified analysis.}.  \cite{deb} initiated the study of the stochastic rewards problem with identical edge probabilities and identical vertex weights. They showed that the Ranking algorithm is 0.534 competitive and proposed a new online algorithm with an improved guarantee of $0.567$ when the probabilities are identical and also vanishingly small. Subsequently, \cite{mehta} gave a 0.534 competitive algorithm for the case of vanishingly small but heterogeneous probabilities. They also showed that the greedy algorithm that matches each arrival to an available resource with the highest expected reward (which equals the highest probability of success), is 0.5 competitive for the general setting. \cite{negin} showed that the greedy algorithm is 0.5 competitive much more broadly, including settings with heterogeneous vertex weights. This is still the best known competitive ratio guarantee for the general stochastic rewards problem.

For settings with heterogeneous vertex weights, \cite{goel} gave a generalization of the Ranking algorithm, called \emph{Perturbed Greedy}, and showed that it achieves the highest possible competitive ratio of $(1-1/e)$ when edge probabilities are 1. %Their algorithm matches each arrival to an available resource with the highest (randomly) perturbed weight. 
More recently, \cite{stochrew} showed that a natural generalization of Perturbed Greedy (see Algorithm \ref{rank}) is $(1-1/e)$ competitive when edge probabilities can be decomposed into a product of two probabilities, one corresponding to the resource and the other corresponding to the arrival, i.e., $p_{it}=p_ip_t\,\,\forall (i,t)\in E$. This includes the setting of identical edge probabilities and identical vertex weights (studied by \cite{deb}) as a special case. They also gave a 0.596 competitive algorithm for settings where edge probabilities are vanishingly small. Concurrently, \cite{huang} gave a 0.572 competitive algorithm for vanishingly small edge probabilities against a stronger benchmark. Very recently, \cite{unknown} showed that the Perturbed Greedy algorithm is 0.508 competitive when probabilities are vanishingly small\footnote{They also show an improved competitive ratio of 0.522 for a slightly modified version of Algorithm \ref{rank}.}. 

To summarize, there are two well studied special cases of the stochastic rewards problem. Firstly, the case where probabilities are identical, and more generally, decomposable. Secondly, the case where probabilities are vanishingly small. In this paper, our focus is on the first strand. Note that Perturbed Greedy is the only algorithm that is known to have a competitive ratio strictly better than 0.5 in all well studied special cases of the stochastic rewards problem. 

%{\color{red} Algorithm here}
%Subsequently, \cite{mehta} considered the 
%\SetKwComment{Comment}{// }{}
\begin{algorithm}
	\caption{Perturbed Greedy for Matching with Stochastic Rewards}\label{rank}
\begin{algorithmic}[1]
%	\SetAlgoNoLine
	%	\KwIn{I}
	%	\KwOut{The frequency number ($FreNum_{\alpha}$) node $\alpha$ gets assigned.}
\State	Let $S = I$ and for every $i\in I$ generate i.i.d.\ r.v.\ $y_i\in U[0,1]$
	\For{\emph{every new arrival } $t$}
\State	\textbf{match} $t$ to	$i^*=\underset{i\mid (i,t)\in E, i\in S}{\arg\max} p_{it}r_i (1-e^{y_i-1})$% \hfill 
\Comment{match greedily using perturbed expected rewards}
%	\State	\textbf{match} $t$ to $i^*$
		\If {match succeeds} 
		\State \textbf{update} $S=S\backslash\{i^*\}$
\EndIf
\EndFor
\end{algorithmic}
\end{algorithm}
%\textbf{Assessment of previous work in terms of simplicity:} 
Classic techniques, such as the randomized primal-dual method of \cite{devanur} are very well suited for deterministic rewards but have proven to be of limited use in settings with stochastic rewards. While Perturbed Greedy has the same competitive ratio guarantee for deterministic rewards and for stochastic rewards with decomposable probabilities, the existing analyses of Perturbed Greedy and all other algorithms for stochastic rewards rely on various new technical developments. For example, \cite{stochrew} propose a new (and more sophisticated) sample path based analysis, \cite{huang} show new structural properties and develop a new (and more sophisticated) primal-dual analysis. 

\textbf{Our contributions:} 
We give a simple proof of $(1-1/e)$ competitiveness of Perturbed Greedy for vertex-weighted online matching with identical edge probabilities, and more generally, for decomposable edge probabilities. Our method reduces each instance of the stochastic rewards problem to a distribution over instances with deterministic rewards. The $(1-1/e)$ competitiveness of Perturbed Greedy then follows directly from results in deterministic settings. More broadly, our approach shows that for stochastic rewards with decomposable edge probabilities, it suffices to analyze the competitive ratio of an algorithm for the special case of deterministic rewards in order to obtain a competitive ratio guarantee for stochastic rewards. Further, the simplicity of our approach reveals that the Perturbed Greedy algorithm has a competitive ratio of $(1-1/e)$ even in certain settings with correlated rewards, where no results were previously known. Finally, we show that without any special assumptions, the Perturbed Greedy algorithm has a competitive ratio strictly less than $(1-1/e)$ for vertex-weighted online matching with stochastic rewards. Prior to our work, no upper bound was known on the competitive ratio of Perturbed Greedy for the general setting.

%Finally, we show that \alg\ is strictly less than $(1-1/e)$ competitive when the edge probabilities are non-decomposable. Using recently shown upper bounds for the Adwords with unknown budgets problem \cite{unkads, gavintang}.
\section{Preliminaries}\label{sec:prelim}
We start by recalling the classic vertex-weighted online bipartite matching problem. We refer to it as the \emph{deterministic rewards} problem.

\textbf{Deterministic rewards setting:} Consider a graph $G=(I,T,E)$ with $n$ resources $i\in I$ that are known in advance and $m$ arrivals $t\in T$ that are sequentially revealed one at a time. W.l.o.g., the vertices in $T$ arrive in the order of their index. Let $N(i)$ denote set of arrivals with an edge to $i$ , i.e., neighbors of $i$. Similarly, let $N(t)$ denote the neighbors of $t$. When a vertex $t\in T$ arrives, all edges $(i,t)\in E$ incident on $t$ are revealed and we must make an irrevocable decision to match the arrival to any one of the available offline neighbors. If arrival $t$ is matched to resource $i$, we get a reward $r_i$ (the weight of vertex $i$) and $i$ is unavailable to future arrivals. The objective is to maximize the total reward summed over all arrivals. We will use the following classic result.
\begin{lemma}[\cite{goel, devanur}]\label{fact1}
	Perturbed Greedy is $(1-1/e)$ competitive for the deterministic rewards setting.
\end{lemma}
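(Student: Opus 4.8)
The plan is to prove this via the randomized primal-dual method of \cite{devanur}, which gives the cleanest route to the vertex-weighted guarantee. First I would write down the matching LP relaxation and its dual. The primal maximizes $\sum_{(i,t)\in E} r_i x_{it}$ subject to $\sum_{t\in N(i)} x_{it}\le 1$ for each resource $i$, $\sum_{i\in N(t)} x_{it}\le 1$ for each arrival $t$, and $x\ge 0$; the dual minimizes $\sum_i \alpha_i + \sum_t \beta_t$ subject to $\alpha_i + \beta_t \ge r_i$ for every edge $(i,t)\in E$ together with $\alpha,\beta\ge 0$. Since $\opt(G)$ is bounded above by the optimal primal value, and hence (by weak duality) by the value of \emph{any} feasible dual, it suffices to construct a feasible dual whose value is at most $\tfrac{1}{1-1/e}$ times the expected reward of Perturbed Greedy.

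Next I would read off dual variables directly from a run of the algorithm. Whenever an arrival $t$ is matched to a resource $i$ (with seed $y_i$), set $\alpha_i = r_i e^{y_i-1}$ and $\beta_t = r_i(1-e^{y_i-1})$, leaving all other dual variables at $0$. Because $\alpha_i + \beta_t = r_i$ on every matched edge, these variables satisfy, for every realization of the seeds, $\sum_i \alpha_i + \sum_t \beta_t = \sum_{\text{matched }(i,t)} r_i$, the total reward of that run. Taking expectations over $\{y_i\}$, the right-hand side becomes the value $\alg(G)$ of Perturbed Greedy, so it then remains only to show approximate dual feasibility in expectation: for every edge $(i,t)$,
\[
\mathbb{E}[\alpha_i + \beta_t] \ge (1-1/e)\, r_i .
\]
Indeed, this inequality makes $\big(\tfrac{\mathbb{E}[\alpha_i]}{1-1/e},\, \tfrac{\mathbb{E}[\beta_t]}{1-1/e}\big)$ a feasible dual solution, and weak duality then yields $\opt(G) \le \tfrac{1}{1-1/e}\,\alg(G)$, which is exactly the claim.

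To establish the displayed inequality I would fix an edge $(i,t)$ and condition on all seeds $y_{-i}$ other than $y_i$. I would compare the actual run to the counterfactual run in which resource $i$ is deleted, letting $v$ denote the perturbed value $r_j(1-e^{y_j-1})$ of whatever resource $t$ is matched to in that counterfactual (with $v=0$ if $t$ is left unmatched), and defining the critical seed $y_c$ by $r_i(1-e^{y_c-1}) = v$. The two structural facts I need are: (i) if $y_i < y_c$, so that $i$ is more attractive to $t$ than $v$, then $i$ is matched in the actual run, giving $\alpha_i = r_i e^{y_i-1}$; and (ii) for every value of $y_i$, arrival $t$ is matched to something of perturbed value at least $v$, so that $\beta_t \ge v = r_i(1-e^{y_c-1})$. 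Granting these, the computation
\[
\mathbb{E}_{y_i}[\alpha_i + \beta_t] \ge \int_0^{y_c} r_i e^{y-1}\,dy + r_i\big(1-e^{y_c-1}\big) = r_i\big(e^{y_c-1}-e^{-1}\big) + r_i\big(1-e^{y_c-1}\big) = (1-1/e)\,r_i
\]
holds for every $y_{-i}$, hence in expectation, which finishes the argument.

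The main obstacle is justifying the monotonicity facts (i) and (ii), and in particular the counterfactual comparison in (ii): that reinserting resource $i$ into the instance can only improve, never worsen, the perturbed value obtained by $t$, regardless of $i$'s seed. This is the heart of the analysis and requires a careful exchange/coupling argument showing that the effect of adding $i$ propagates along a single alternating path of re-matchings that never decreases the value assigned to $t$. The fact that Perturbed Greedy compares resources through the single monotone potential $r_j(1-e^{y_j-1})$ is what makes the threshold $y_c$ well defined and the coupling go through; checking that this monotonicity survives heterogeneous weights $r_i$ is the one place where the vertex-weighted case demands more care than the unweighted Ranking analysis.
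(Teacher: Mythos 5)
The paper never proves Lemma \ref{fact1}: it is imported as a black box from \cite{goel} and \cite{devanur}, so there is no in-paper argument to compare against, and the right comparison is with the cited analyses. Your proposal is a reconstruction of the randomized primal-dual proof of \cite{devanur}, i.e., of one of the two cited sources, and as an outline it is correct. The dual split $\alpha_i = r_i e^{y_i-1}$, $\beta_t = r_i(1-e^{y_i-1})$ on matched edges (so that the realized dual value equals the realized reward), the rescaling by $1/(1-1/e)$ to recover feasibility, weak duality against the LP relaxation, the threshold $y_c$ defined from the counterfactual run with $i$ deleted, and the computation $\int_0^{y_c} r_i e^{y-1}\,dy + r_i\bigl(1-e^{y_c-1}\bigr) = (1-1/e)\,r_i$ are exactly the cited argument, and the two structural facts you isolate as (i) and (ii) are precisely the lemmas that carry it.

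Two remarks on the part you defer. Fact (i) is easier than your closing paragraph suggests and needs no coupling: if $i$ were never matched in the actual run, then the actual and counterfactual runs make identical decisions at every arrival (by induction, the available sets differ only in $i$, which is never chosen), so $t$ would end up matched at perturbed value $v$ while $i$ sat available with perturbed value $r_i(1-e^{y_i-1}) > v$, contradicting greediness; hence $i$ is matched whenever $y_i < y_c$. Fact (ii) is where the real content lies: one shows, by induction over arrivals, that deleting a single resource perturbs the matching along one alternating path on which every affected arrival's match weakly degrades; equivalently, reinserting $i$ (at any seed) weakly improves every arrival's perturbed match value, and heterogeneous weights cause no difficulty because all comparisons in Perturbed Greedy are mediated by the single scalar $r_j(1-e^{y_j-1})$, as you note. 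With those two lemmas written out your argument is complete, and it is essentially the proof that the paper's citation points to.
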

%{\color{red} missing perturbed greedy and CR result}

\textbf{Stochastic rewards setting:} This is a generalization of the deterministic rewards setting, where matches can fail with some known probability. In particular, when $t$ arrives, all edges incident on $t$ are revealed along with probabilities $p_{it}\,\, \forall i\in N(t)$. If $t$ is matched to $i$, the match succeeds with probability $p_{it}$ independent of all other events and we get a reward $r_i$. If the match fails, $t$ departs, resource $i$ is available to match to future arrivals and we get no reward. When all edge probabilities are equal to 1, we obtain the deterministic rewards setting. %to denote an instance of the problem 
We are interested in the following special cases of the problem. %Stochastic rewards with identical and decomp as special cases. Use of notation $(I,T,E, \mb{p})$ with no mb on p for identical, and product $\mb{p_i}\mb{p_t}$
\begin{itemize}
	\item \textbf{Identical probabilities:} %In this setting the edge probabilities are identical, i.e., 
	$p_{it}=p\,\, \forall (i,t)\in E$. %This setting was first introduced by \cite{deb}, who initiated the study of the stochastic rewards problem.
	\item \textbf{Decomposable probabilities:} %In this setting, the probability of each edge decomposes as a product of two probabilities one of which corresponds to the resource and the other probability corresponds to the arrival. In particular, 
	$p_{it}=p_i\, p_t\,\,\forall (i,t)\in E$. This subsumes identical probabilities. % as a special case.
\end{itemize}
We evaluate competitive ratio against a non-omniscient benchmark that %only knows the probabilities of edges that it includes in the matching. In particular, we focus on offline benchmark that: $(i)$ 
knows all edge probabilities and can match arrivals in \emph{any order} but does not know the realization of an edge until after it is included in the matching. The algorithm can adapt to edge realizations and we call this the \emph{non-anticipative} offline benchmark. %We refer to \cite{stochrew} for a discussion on other possible benchmarks.
\smallskip

\textbf{$b$-matching:} Introduced by \cite{pruhs}, this is a generalization of the deterministic setting where resources have starting budgets or capacities given by positive integers $\{b_i\}_{i\in I}$. The budget of a resource specifies the number of arrivals it can be matched to. We get a reward of $r_i$ every time we match resource $i$. When $b_i=1\,\, \forall i\in I$, we obtain the deterministic setting specified above. We are particularly interested in the setting where the online algorithm does not know the budgets $\{b_i\}_{i\in I}$ at the beginning, called \textbf{$\mb{b}$-matching with unknown budgets}. In this setting, the budget of a resource is revealed to the online algorithm only after it has been used up, i.e., if resource $i$ is matched to $b_i$ arrivals then the budget of $i$ is revealed to the online algorithm after $i$ is matched for the $b_i$-th time.  The Perturbed Greedy algorithm is well defined for this setting -- we compute a perturbed vertex weight for each resource and match each arrival to a resource with available budget and the highest perturbed weight. We use the following recent result for Perturbed Greedy in this setting. \cite{vazirani} gives a particularly elegant proof of the result. %- note that it generalizes deterministic rewards. Also mention adwords with unknown budgets and its connections for small probabilities.
\begin{lemma}[\cite{albers, vazirani}]\label{fact2}
	Perturbed Greedy is $(1-1/e)$ competitive for $b$-matching with unknown budgets against offline that knows all budgets in advance.
\end{lemma}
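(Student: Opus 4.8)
The plan is to establish the bound by the randomized primal--dual method of \cite{devanur}, in the streamlined form due to \cite{vazirani}, working with the deterministic $b$-matching instance directly. First I would write the LP relaxation of $b$-matching: a variable $x_{it}$ per edge, resource constraints $\sum_{t} x_{it}\le b_i$, arrival constraints $\sum_{i} x_{it}\le 1$, and objective $\max\sum_{(i,t)\in E} r_i x_{it}$. Its dual has one variable $\alpha_i$ per resource (with coefficient $b_i$) and one variable $\beta_t$ per arrival, subject to $\alpha_i+\beta_t\ge r_i$ for each edge. Because the benchmark knows all budgets, its value is at most the optimum of this LP, so it suffices to exhibit dual variables that are feasible in expectation (after scaling) and whose value tracks $\mathbb{E}[\alg]$.

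A preliminary remark handles the ``unknown budgets'' feature: since a single perturbation $y_i$ is drawn per resource and reused for all of its copies, the matching rule---offer the available resource of largest perturbed weight $r_i(1-e^{y_i-1})$---never references $b_i$ until $i$ is exhausted. Thus Perturbed Greedy is well defined without advance knowledge of the budgets, and the analysis below compares it to offline that does know them.

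Next I would build the dual by splitting each unit of reward as it is earned. When resource $i$ is matched to arrival $t$, I would credit the arrival with $\beta_t=r_i\,e^{y_i-1}$ and the resource with the complementary share $r_i(1-e^{y_i-1})$; writing $k_i$ for the total number of matches of $i$, set $\alpha_i=\tfrac{k_i}{b_i}\,r_i(1-e^{y_i-1})$. A one-line computation gives the deterministic identity $\sum_i b_i\alpha_i+\sum_t\beta_t=\sum_i k_i r_i=\alg$, so the whole argument reduces to the per-edge inequality $\mathbb{E}[\alpha_i]+\mathbb{E}[\beta_t]\ge(1-1/e)\,r_i$; scaling the expected dual by $\tfrac{1}{1-1/e}$ then produces a feasible solution, and weak duality yields $\opt\le\tfrac{1}{1-1/e}\,\mathbb{E}[\alg]$.

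The hard part is the per-edge inequality, and it is where the integer budgets enter. Fixing the perturbations of all resources other than $i$, I would argue a threshold dichotomy in $y_i$: lowering $y_i$ raises $i$'s perturbed weight, so the event ``$i$ ends fully utilized'' is a down-set in $y_i$, cut off at some critical value $y_c$. For $y_i\le y_c$ the resource is saturated, so $k_i=b_i$ and $\alpha_i=r_i(1-e^{y_i-1})$; for $y_i>y_c$ resource $i$ retains spare budget throughout---in particular when $t$ arrives---so $t$ must have been matched to a resource of perturbed weight at least $r_i(1-e^{y_c-1})$, forcing $\beta_t$ up. These two regimes combine, through the same one-dimensional integral estimate that underlies the $(1-1/e)$ bound for Ranking and Perturbed Greedy in the unit-budget case (\cite{goel,devanur}), to give $\mathbb{E}_{y_i}[\alpha_i+\beta_t]\ge(1-1/e)r_i$. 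I expect the main obstacle to be justifying the monotonicity that ``$i$ ends fully utilized'' is indeed a down-set in $y_i$: this needs an exchange/coupling argument comparing two runs that differ only in $y_i$ (the $b$-matching analogue of the vertex-removal coupling used for Ranking), after which everything reduces to bookkeeping and the known calculus inequality.
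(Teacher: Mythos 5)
The paper does not actually prove this lemma: it imports it verbatim from \cite{albers, vazirani}, so your attempt has to stand on its own as a from-scratch proof of their theorem. As written it does not, because your gain-sharing is reversed relative to the Devanur--Jain--Kleinberg split that makes the primal--dual argument work. In that analysis the \emph{arrival} must be credited with the perturbed value $r_i(1-e^{y_i-1})$ --- exactly the quantity the greedy rule maximizes, so that the dominance lemma translates into a lower bound on $\beta_t$ --- while the \emph{resource} keeps the complement $r_i e^{y_i-1}$, whose integral over the saturation interval $[0,y_c]$ is $r_i(e^{y_c-1}-1/e)$; the two pieces then sum to exactly $r_i(1-1/e)$ for every value of $y_c$. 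You assign $\beta_t=r_i e^{y_i-1}$ and give the resource $r_i(1-e^{y_i-1})$. With that split, dual feasibility fails already on trivial instances: take one resource with $r_i=b_i=1$ and two arrivals, both neighbors of $i$. Perturbed Greedy matches the first arrival and exhausts $i$, so $\alpha_i=1-e^{y_i-1}$ and $\beta_{t_2}=0$, and for the edge $(i,t_2)$ one gets $\mathbb{E}[\alpha_i]+\mathbb{E}[\beta_{t_2}]=\int_0^1\bigl(1-e^{y-1}\bigr)\,dy=1/e<1-1/e$. So the per-edge inequality your whole argument rests on is false for your choice of duals; a lower bound on the perturbed weight of $t$'s match says nothing useful about $r_j e^{y_j-1}$.

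A second, independent problem: you invoke saturation only on $\{y_i\le y_c\}$ and dominance only on $\{y_i>y_c\}$, treating them as a partition. The standard proof needs dominance for \emph{all} $y_i$ (it is established by comparing with the run in which $i$ is removed, not merely from $i$'s availability when $t$ arrives), and then adds the resource-side integral on top of it. With your partition, even after fixing the split, the bound becomes $\int_0^{y_c}e^{y-1}\,dy+(1-y_c)\bigl(1-e^{y_c-1}\bigr)$, which at $y_c=1/2$ is roughly $0.435<1-1/e$. Finally, the step you explicitly defer --- proving the threshold/monotonicity and dominance structure when a single perturbation $y_i$ is reused across all $b_i$ matches of a resource whose budget is unknown --- is precisely where \cite{albers} and \cite{vazirani} do their real work; it is the genuinely new content of the lemma beyond the unit-budget case. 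Your frame (the $b$-matching LP, duals weighted by $b_i$, budget-obliviousness of the algorithm, weak duality against an offline that knows the budgets) is the right one, but the proof as proposed does not go through.
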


\textbf{Adwords:} Introduced by \cite{msvv}, this is a generalization of the $b$-matching problem where resources make bids on the arrivals they have an edge to. Let $b_{it}$ denote the bid made by $i$ on arrival $t$. The bid is revealed when $t$ arrives and it is the number of units of $i$ that will be used if $i$ is matched to $t$. If $i$ and $t$ are matched, we get a reward of $b_{it}$ subject to the available budget. Let $b_i(t)$ denote the remaining budget of $i$ when $t$ arrives. Matching $i$ to $t$ uses $\min\{b_i(t), b_{it}\}$ of $i$'s remaining budget and we get a reward equal to the number of units that are used (=$\min\{b_i(t), b_{it}\}$). An instance of the $b$-matching problem can be represented as an instance of Adwords by setting the starting budget to $b_ir_i$ and bids to $b_{it}=r_i$ for all $i\in I,\, t\in N(i)$.  In the \textbf{Adwords with unknown budgets} setting, the starting budgets are unknown to the online algorithm and the budget of a resource is revealed only once it is used up.
\smallskip

\textbf{Non-anticipative algorithms:} For the stochastic rewards setting, we say that an algorithm $\mc{A}$ (could be online or offline) is non-anticipative when it does not know the realization (success/failure) of an edge before it is included in the matching. The outcome of an edge is revealed to $\mc{A}$ immediately after it includes the edge in the matching. Note that an any algorithm for stochastic rewards is well defined on instances with deterministic rewards (which are special cases). %Further, an algorithm for stochastic rewards is also well defined for instances of $b$-matching with unknown budgets and we discuss this in Section \ref{sec:decomp}. 

We use $\alg$ to denote the Perturbed Greedy algorithm for stochastic rewards. We use $\opt$ to denote the non-anticipative offline benchmark for stochastic rewards. Given an instance $\nu$ of some setting and an algorithm $\mc{A}$ for the setting, we use $\mc{A}(\nu)$ to denote the expected total reward in the setting with expectation taken over the intrinsic randomness in the (possibly randomized) algorithm as well as the randomness in the instance (such as stochastic rewards). In particular, $\alg(\nu)$ and $\opt(\nu)$ denote the expected total reward on instance $\nu$ of \alg\ and \opt\ respectively. Finally, we define the notion of a value preserving distribution. The notion plays an important role in our main results.
\smallskip

\noindent \textbf{Value preserving distribution:} Given an instance $\nu$ of the stochastic rewards problem and a distribution $\mc{D}_\nu$ over instances of the stochastic rewards problem, we say that $\mc{D}_\nu$ is value preserving if for every non-anticipative algorithm $\mathcal{A}$,
\[\mc{A}(\nu)=\mbb{E}_{u\sim \mc{D}_\nu}[\mc{A}(u)].\]
%Define non-anticipate algorithms and remark on how such algorithms for stochastic rewards translate to an algorithm for all the other settings (esp. $b$-matching). Also define the simple notation for objective without $\mbb{E}$.

%\subsection{Discussion on Benchmarks}

\section{Main Results}

\begin{theorem}\label{meta}
	Consider a set $\mc{V}$ of instances of the stochastic rewards problem and a set of value preserving probability distributions $\{\mc{D}_{\nu}\mid \nu\in\mc{V}\}$. %over instances of the stochastic rewards setting such that %$b$-matching with unknown budgets problem such that  %such that for each instance $\nu\in\mc{V}$, there exists a distribution
	%	 $\mathcal{D}_{\nu}$ over instances of the deterministic rewards problem with the property that 
%	for all non-anticipative algorithms $\mc{A}$,%over instances of the deterministic rewards problem such that
%	\[\mc{A}(\nu)=\mbb{E}_{u\sim \mc{D}_\nu}[\mc{A}(u)]\quad \forall \nu\in\mc{V}.\]
We have that,
	\[\inf_{\nu\in\mc{V}}\frac{\alg(\nu)}{\opt(\nu)}\geq \inf_{u\in \underset{\nu\in\mc{V}}{\bigcup} supp(\mc{D}_\nu)}\frac{\alg(u)}{\opt(u)},\]
	here $supp(\mc{D}_\nu)$ represents the support of distribution $\mc{D}_\nu$. 
\end{theorem}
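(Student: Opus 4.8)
The plan is to exploit the fact that \emph{both} the algorithm $\alg$ (Perturbed Greedy) and the benchmark $\opt$ are non-anticipative, so that the averaging identity in the hypothesis can be brought to bear on each of them. I would fix an arbitrary $\nu\in\mc{V}$ and write $c_\nu := \inf_{u\in supp(\mc{D}_\nu)} \alg(u)/\opt(u)$ for the competitive ratio over the decomposed instances. It then suffices to prove $\alg(\nu)\ge c_\nu\cdot\opt(\nu)$ for every such $\nu$: dividing by $\opt(\nu)$ and taking the infimum over $\nu\in\mc{V}$ gives $\inf_\nu \alg(\nu)/\opt(\nu)\ge\inf_\nu c_\nu$, which is exactly the claimed bound (reading the free $\nu$ on the right-hand side as also ranging under the infimum).

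For the numerator, the key point is that $\alg$ is a single fixed (randomized) non-anticipative algorithm, so the hypothesis applies to it verbatim: $\alg(\nu)=\mbb{E}_{u\sim\mc{D}_\nu}[\alg(u)]$. By the definition of $c_\nu$ we have $\alg(u)\ge c_\nu\cdot\opt(u)$ for every $u\in supp(\mc{D}_\nu)$; taking expectations over $u\sim\mc{D}_\nu$ and using $c_\nu\ge 0$ then yields $\alg(\nu)\ge c_\nu\cdot\mbb{E}_{u\sim\mc{D}_\nu}[\opt(u)]$.

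The step I expect to require the most care is handling $\opt$, because $\opt$ is not literally a single algorithm but the pointwise supremum over non-anticipative offline strategies, and a strategy optimal for $u$ need not be optimal for $\nu$. The resolution is to apply the hypothesis to a \emph{fixed} strategy. Let $\mc{A}^*$ be a non-anticipative offline strategy attaining $\opt(\nu)$ on instance $\nu$. Since $\mc{A}^*$ is itself a fixed non-anticipative algorithm, the hypothesis gives $\opt(\nu)=\mc{A}^*(\nu)=\mbb{E}_{u\sim\mc{D}_\nu}[\mc{A}^*(u)]$. On each $u$ the strategy $\mc{A}^*$ is feasible but possibly suboptimal, so $\mc{A}^*(u)\le\opt(u)$, whence $\opt(\nu)\le\mbb{E}_{u\sim\mc{D}_\nu}[\opt(u)]$. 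Only this one-sided bound is needed, and it happens to point in the favorable direction; in particular we never require $\opt$ to decompose exactly.

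Combining the two estimates gives $\alg(\nu)\ge c_\nu\cdot\mbb{E}_{u\sim\mc{D}_\nu}[\opt(u)]\ge c_\nu\cdot\opt(\nu)$, so $\alg(\nu)/\opt(\nu)\ge c_\nu$ whenever $\opt(\nu)>0$; the degenerate case $\opt(\nu)=0$ forces $\alg(\nu)=0$ as well and contributes trivially. Taking the infimum over $\nu\in\mc{V}$ completes the argument. The only genuinely delicate point is the asymmetry between $\alg$ and $\opt$: I use the averaging identity as an equality for the fixed algorithm $\alg$, but for the benchmark I extract only the one-sided inequality $\opt(\nu)\le\mbb{E}_{u\sim\mc{D}_\nu}[\opt(u)]$, obtained by plugging the optimal-for-$\nu$ strategy $\mc{A}^*$ into the hypothesis and invoking its suboptimality on each $u$.
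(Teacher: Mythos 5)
Your proposal is correct and follows essentially the same route as the paper: apply the value-preservation hypothesis to $\alg$, bound $\alg(u)\geq c_\nu\,\opt(u)$ pointwise on the support, and compare against $\opt(\nu)$ via the decomposed instances. The only difference is that the paper invokes the hypothesis as an exact equality for $\opt$ as well (treating the benchmark itself as a non-anticipative algorithm), whereas you derive only the one-sided inequality $\opt(\nu)\leq\mbb{E}_{u\sim\mc{D}_\nu}[\opt(u)]$ by fixing an optimal strategy $\mc{A}^*$ for $\nu$ and using its suboptimality on each $u$ --- a slightly more careful handling of the same step (since $\opt$ is a supremum over strategies rather than a single algorithm), and in fact the same move the paper itself makes later in the proof of Theorem \ref{decomp}.
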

\begin{proof}
	%	Consider an arbitrary instance $\nu\in\mc{V}$ and the corresponding distribution $\mc{D}_\nu$ over instances of the deterministic rewards problem. 
	%We show that \[\frac{\alg(\nu)}{\opt(\nu)}\geq (1-1/e)\qquad \forall \nu\in \mc{V}.\] 
	Using the fact that \alg\ and \opt\ are non-anticipative, we have,
	%	\[\mbb{E}_{u\sim \mc{D}_\nu}[\alg(u)]\quad \text {and }\quad \mbb{E}_{u\sim \mc{D}_\nu}[\opt(u)].\]
	%	Using this, we have %or every $\nu\in\mc{V}$,
	\[\frac{\alg(\nu)}{\opt(\nu)}=\frac{\mbb{E}_{u\sim \mc{D}_\nu}[\alg(u)]}{\mbb{E}_{u\sim \mc{D}_\nu}[\opt(u)]} \qquad \forall \nu\in\mc{V}.\]
	%From Lemma \ref{fact1}, we have that 
	Let $\alpha$ be the highest value such that $\alg(u)\geq \alpha\opt(u)$ for every instance $u\in \underset{\nu\in\mc{V}}{\bigcup}supp(\mc{D}_\nu)$.  %of the $b$-matching with unknown budgets problem. 
	Plugging this into the equality above proves the claim,
	%	From Fact \ref{determ}, for every instance $u$ of the deterministic rewards problem
	\[\frac{\mbb{E}_{u\sim \mc{D}_\nu}[\alg(u)]}{\mbb{E}_{u\sim \mc{D}_\nu}[\opt(u)]}\geq \frac{\mbb{E}_{u\sim \mc{D}_\nu}[\alpha\opt(u)]}{\mbb{E}_{u\sim \mc{D}_\nu}[\opt(u)]}=\alpha\qquad \forall \nu\in\mc{V}.\]
	\hfill
\end{proof}
\begin{corollary}\label{metacor}
	When each $\mc{D}_\nu$ is supported only on settings with deterministic rewards, we have,
	%	We have that,
	\[\inf_{\nu\in\mc{V}}\frac{\alg(\nu)}{\opt(\nu)}\geq (1-1/e).\]
	%	here $supp(\mc{D}_\nu)$ represents the support of distribution $\mc{D}_\nu$.
\end{corollary}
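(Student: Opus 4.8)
The plan is to invoke Theorem~\ref{meta} and then check that its right-hand side is at least $(1-1/e)$ under the additional hypothesis that $\mc{D}_\nu$ is supported only on deterministic instances. By Theorem~\ref{meta}, it suffices to show that $\alg(u)\geq (1-1/e)\,\opt(u)$ for every $u\in supp(\mc{D}_\nu)$, and by assumption every such $u$ is an instance with deterministic rewards. Thus the entire task reduces to checking that, on a deterministic instance, $\alg(u)$ and $\opt(u)$ coincide with their counterparts in the classical vertex-weighted online matching problem, after which Lemma~\ref{fact1} closes the argument.

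First I would observe that on a deterministic instance all edge probabilities equal $1$, so the perturbed expected reward $p_{it}r_i(1-e^{y_i-1})$ used in Algorithm~\ref{rank} collapses to $r_i(1-e^{y_i-1})$. Hence \alg\ run on $u$ makes exactly the same (random) matching decisions as Perturbed Greedy for the deterministic rewards problem, and every offered match succeeds with certainty; therefore $\alg(u)$ equals the expected reward of Perturbed Greedy in the deterministic setting on $u$.

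Next I would argue that $\opt(u)$, the non-anticipative offline benchmark, equals the offline optimum (maximum-weight matching) of the deterministic instance $u$. The point is that non-anticipation restricts the benchmark only through the uncertainty in edge realizations; when every edge succeeds with probability $1$, including an edge in the matching reveals no new information and carries no risk of failure, so the non-anticipative benchmark is free to realize an optimal matching exactly as an omniscient offline optimum would. This identification --- that the stochastic-rewards offline benchmark specializes to the classical deterministic offline optimum on deterministic instances --- is the only step that requires care, and is the main (if modest) obstacle in the argument.

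Finally, with both identifications in hand, Lemma~\ref{fact1} gives $\alg(u)\geq (1-1/e)\,\opt(u)$ for each deterministic $u$, and hence $\inf_{u\in supp(\mc{D}_\nu)}\alg(u)/\opt(u)\geq (1-1/e)$. Plugging this lower bound into the inequality furnished by Theorem~\ref{meta} yields $\inf_{\nu\in\mc{V}}\alg(\nu)/\opt(\nu)\geq (1-1/e)$, which is the claim.
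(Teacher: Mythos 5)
Your proof is correct and follows essentially the same route as the paper, which simply notes that the corollary follows from Theorem~\ref{meta} together with Lemma~\ref{fact1}. The additional details you supply --- that on a deterministic instance \alg\ coincides with Perturbed Greedy for deterministic rewards (since $p_{it}=1$ makes the perturbed expected reward collapse to $r_i(1-e^{y_i-1})$) and that the non-anticipative benchmark coincides with the offline maximum-weight matching --- are exactly the identifications the paper leaves implicit.
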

\noindent The corollary follows from Lemma \ref{fact1}. Observe that we could replace \alg\ with another algorithm for stochastic rewards that has competitive ratio $\alpha$ in the deterministic setting. The theorem gives us a sufficient condition for obtaining this $\alpha$ competitiveness in a stochastic setting without any new analysis. %such that if the algorithm has competitive ratio $\alpha$ in the deterministic setting, then it also has competitive ratio $\alpha$ in the stochastic setting. %of this algorithm for stochastic rewards is the same as its competitive ratio for deterministic rewards provided that we have a certain mapping from
The following general notion will be useful in the subsequent discussion.
%{\color{red} for decomp probabilities the support is over stochastic rewards with general inventory. Generalizing the notion is okay but not all algorithms will generalize.}

%\textbf{Value preserving distribution:} Given an instance $\nu$ of the stochastic rewards problem and a distribution $\mc{D}_\nu$ over instances of the stochastic rewards problem, we say that $\mc{D}_\nu$ is value preserving if for every non-anticipative algorithm $\mathcal{A}$,
%\[\mc{A}(\nu)=\mbb{E}_{u\sim \mc{D}_\nu}[\mc{A}(u)].\]
For any instance $\nu$, the trivial distribution that has a probability mass of 1 on $\nu$ is value preserving. To apply Corollary \ref{metacor}, we need to show the existence of value preserving distributions that have support only on instances of the deterministic rewards problem. Next, we establish that such distributions exist for every $\nu\in \mc{V}$ when $\mc{V}$ is one of several special cases of the general stochastic rewards problem. % belongs to some special case of the stochastic rewards problem.
%\begin{corollary}
%	\alg\ is $(1-1/e)$ competitive for the following special cases of the stochastic rewards problem:
%\end{corollary}
%{\color{red} Would be better to define b matching with stochastic budgets where the distribution is known. Then, we reduce resource identical probability case to a single b matching with stochastic reward setting. Any algorithm for the former is well defined for the latter.}
\subsection{Identical Probabilities ($\mb{p_{it}=p\,\, \forall (i,t)\in E}$)} 

\begin{lemma}\label{identical}
	There exists a value preserving distribution over instances of deterministic rewards for each instance of the stochastic rewards problem with identical probabilities.
\end{lemma}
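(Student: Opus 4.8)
The plan is to build, for each identical-probability instance $\nu$ with common success probability $p$, a distribution $\mc{D}_\nu$ supported on deterministic-rewards instances, and then invoke Corollary \ref{metacor}. The construction pre-samples the stochastic outcomes before the algorithm runs: for every arrival $t\in T$ I would draw an independent token $Z_t\in\{0,1\}$ with $\mbb{P}(Z_t=1)=p$, call $t$ \emph{live} if $Z_t=1$ and \emph{dead} otherwise, and let the realized instance $u$ be obtained from $\nu$ by keeping every live arrival together with all of its edges and weights (so that matches now succeed with certainty) and deleting every dead arrival. Each $u$ in the support then has all remaining edges succeeding, hence is a deterministic-rewards instance, so once $\mc{D}_\nu$ is shown to be value preserving the $(1-1/e)$ bound follows immediately from Corollary \ref{metacor}.

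The reason this should be value preserving is the interplay between identical probabilities and non-anticipation. Fix any non-anticipative $\mc{A}$ and run it on $\nu$. Each arrival is included in the matching at most once, and because every edge incident to $t$ carries the same probability $p$, the success of $t$'s single attempt is a Bernoulli$(p)$ draw that is independent of which neighbor $\mc{A}$ selected; non-anticipation guarantees that $\mc{A}$ commits to this neighbor before the draw is revealed. I would therefore couple the outcome of $t$'s attempt with the token $Z_t$. Under this coupling the matches that succeed on $\nu$ are exactly the matches made to live arrivals, while a dead arrival yields no reward and frees no resource, so it has the same null effect as its deletion in $u$. Matching the two runs step by step should then give $\mc{A}(\nu)=\mbb{E}_{u\sim\mc{D}_\nu}[\mc{A}(u)]$, which is precisely value preservation.

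The step I expect to be the main obstacle is making this coupling argument uniform over the relevant non-anticipative algorithms, in particular simultaneously for \alg\ and \opt. Two points need care. First, attaching the randomness to the arrival rather than to the chosen edge must be legitimate: this is exactly where identical probabilities enter, since the success probability does not depend on the partner, and the reduction would break if the probabilities differed across the edges incident to $t$. Second, a dead arrival must genuinely act as a no-op, so that its removal does not perturb the resources available to, or the decisions taken on, the surviving live arrivals; here I would argue that, since a failed attempt neither produces reward nor consumes its resource, the set of available resources entering each subsequent arrival is identical in the two runs, so the sequence of successful matches—and hence the total reward—coincides. Verifying these two points, together with checking that \opt\ respects the same coupling (it too includes each arrival at most once and cannot foresee the outcome of an edge before including it), completes the proof.
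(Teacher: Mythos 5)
Your proposal is correct and is essentially identical to the paper's own proof: the paper likewise samples an independent Bernoulli $s_t$ with success probability $p$ for each arrival, forms the deterministic instance on the surviving arrivals $T^*=\{t\mid s_t=1\}$, and couples the run of any non-anticipative $\mc{A}$ on $\nu$ with its run on $T^*$ by declaring a match at $t$ successful iff $s_t=1$. The two points you flag as needing care (attaching the randomness to the arrival is only legitimate because all edges at $t$ share the probability $p$, and a failed match is a no-op on the resource set) are exactly the observations the paper records in its remark following the proof.
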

\begin{proof}
	Let $\nu$ be an arbitrary instance of stochastic rewards with graph $G=(I,T,E)$, vertex weights $\{r_i\}_{i\in I}$, and identical probabilities $p$. %Let $\mc{A}$ be some non-anticipative algorithm. %For any online or offline algorithm that is non-anticipative i.e., does not know the status of an edge before adding it to the matching, 
	%		The following problem setting is probabilistically equivalent to $\nu$:
	To generate a distribution $\mc{D}_\nu$ over instances with deterministic rewards, we independently sample Bernoulli random variables $\{s_t\}_{t\in T}$. The value of $s_t$ is 1 with probability (w.p.) $p$ and 0 w.p.\ $1-p$. We generate an instance with deterministic rewards by selecting the subset of arrivals where $s_t$ is 1. Formally, consider an instance of the deterministic rewards problem with the same set of resources $I$, reduced set of arrivals \[T^*=\{t\mid s_t=1, t\in T\}\] 
	that arrive in the same order, and reduced set of edges
	\[E^*=\{(i,t) \mid i\in I, t\in T^*, (i,t)\in E\}.\]
	Through the above process, the randomness of $\{s_t\}_{t\in T}$ induces a distribution over instances with deterministic rewards. %We use a natrual stochastic coupling 
	It remains to show that this distribution is value preserving. 
	
	Let $\mc{A}$ be some non-anticipative algorithm. For a randomized algorithm $\mathcal{A}$, arbitrarily fix the (intrinsic) random seeds of the algorithm. We can generate sample paths of $\nu$ using the values of $\{s_t\}_{t\in T}$ as follows. Consider the execution of $\mathcal{A}$ on $\nu$ such that when $\mc{A}$ matches arrival $t\in T$ to some available resource, the match succeeds if and only if $s_t=1$ (an event that occurs with probability $p$). The total reward of $\mc{A}$ on this sample path of $\nu$ is the same as the total reward of $\mc{A}$ on the deterministic reward instance with reduced arrival set $T^*$ defined above. Taking expectation over the binary random variables $\{s_t\}_{t\in T}$ and the (independent) intrinsic randomness of $\mc{A}$ gives us the desired equivalence of expected total reward. %For a randomized algorithm $\mc{A}$, the expected total reward of $\mc{A}$ is identical on the sample path of $\nu$ and the corresponding deterministic rewards instance.
	%We couple a sample path (over stochastic rewards) of $\mathcal{A}$ on $\nu$ with the execution of $\mc{A}$ on a deterministic instance from the support of $\mc{D}_\nu$.   When the algorithm matches arrival $t$ to an available neighbor, the match succeeds if $s_t=1$ and fails otherwise. This is regardless of the resource matched to $t$. Recall that offline algorithm can match the arrivals in any order.	
	%	In the new setting above the stochastic rewards of all edges incident on an arrival are coupled together -- either all edges are realized or all edges fail.
	%	The stochastic equivalence between this problem and \iden\ stems from the fact that any algorithm can match an arrival to at most one resource. Using this lemma now gives us a straightforward proof of the CR guarantee. 
\end{proof}

\textbf{Remarks:} In the proof above, we could generate sample paths of $\mc{A}$ (on $\nu$) by sampling a single random variable $s_t$ for each arrival because (1) $\mc{A}$ matches an arrival to at most one resource and (2) for identical probabilities, $s_t$ can represent success/failure of matching $t$ regardless of the resource $t$ is matched to.  It is easy to verify that the result generalizes to instances of the stochastic rewards problem where all edges incident on a particular arrival have the same probability, i.e., $p_{it}=p_t\,\, \forall t\in T$. %In fact, the result also holds when the rewards are correlated with some restrictions and we discuss this further in Section \ref{sec:correl}. 
In the next section, we show a stronger generalization for decomposable probabilities. 

\subsection{Decomposable Probabilities ($\mb{p_{it}=p_i\,p_t\,\, \forall (i,t)\in E}$)} \hfill\\

\noindent The idea for identical probabilities does not directly apply when the success probabilities for edges incident on a given arrival are different. To reduce the problem to a deterministic setting we first need some intermediate results. % using the idea from the previous section. 
\begin{lemma}\label{interim}
	For any given instance of stochastic rewards with decomposable probabilities, there exists a value preserving distribution over instances %$(I,T^*, E^*)$ 
	of stochastic rewards where all edges incident on the same resource have identical success probability. %with edge probabilities $p_{it}=p_i$ for every edge $(i,t)\in E$. %\,\, \forall (i,t)\in E^*$.
\end{lemma}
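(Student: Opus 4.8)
The plan is to show that for decomposable probabilities $p_{it} = p_i p_t$, we can "absorb" the arrival-side factor $p_t$ into a fresh random subsampling of arrivals, exactly as in Lemma \ref{identical}, leaving behind a stochastic rewards instance whose edge probabilities depend only on the resource, namely $p_{it} = p_i$. First I would take an arbitrary instance $\nu$ of stochastic rewards with graph $G = (I,T,E)$, weights $\{r_i\}$, and decomposable probabilities $p_{it} = p_i p_t$. I would then sample independent Bernoulli random variables $\{s_t\}_{t\in T}$ with $\mathbb{P}(s_t = 1) = p_t$, and define a reduced instance on the surviving arrivals $T^* = \{t \mid s_t = 1\}$ (in the same order), with edges inherited from $E$, and with new edge probabilities $q_{it} = p_i$ for every surviving edge $(i,t)$ with $t \in T^*$. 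This induces a distribution $\mc{D}_\nu$ supported on instances of stochastic rewards with resource-only probabilities $p_{it} = p_i$, as required.

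The heart of the argument is showing $\mc{D}_\nu$ is value preserving, i.e.\ $\mc{A}(\nu) = \mbb{E}_{u\sim\mc{D}_\nu}[\mc{A}(u)]$ for every non-anticipative $\mc{A}$. Following the coupling idea from Lemma \ref{identical}, I would fix the intrinsic random seed of $\mc{A}$ and couple the randomness of $\nu$ to that of the reduced instance. The key observation is that a decomposable success event for matching $t$ to $i$ can be split as a product of two independent coin flips: one coin with bias $p_t$ (shared across all resources $t$ could be matched to) and one coin with bias $p_i$. Because a non-anticipative algorithm matches each arrival $t$ to at most one resource, the arrival-side coin $s_t$ is consumed at most once, so it is legitimate to sample $s_t$ up front, independently of which resource $\mc{A}$ eventually offers $t$. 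On any sample path, if $s_t = 0$ the match involving $t$ fails outright and contributes nothing, which is identical to $t$ being absent from $T^*$; if $s_t = 1$, the residual success probability of matching $t$ to $i$ is exactly $p_i$, which matches the reduced instance's probability $q_{it} = p_i$. I would make this precise by exhibiting a bijection between sample paths of $\mc{A}$ on $\nu$ (driven by the $\{s_t\}$ coins and the resource-side coins) and sample paths of $\mc{A}$ on the drawn instance $u$, showing the total reward agrees path-by-path; taking expectations over the $\{s_t\}$ and the resource-side randomness then yields the claimed equality.

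I would then note that this lemma does not by itself reduce to deterministic rewards --- it only strips the arrival dependence, leaving resource-only probabilities $p_{it} = p_i$. So this is an intermediate step, and a subsequent result (presumably handling the $p_{it} = p_i$ case) would compose with it, via the tower property of value-preserving distributions: if $\mc{D}_\nu$ is value preserving and each $u \in supp(\mc{D}_\nu)$ admits a value-preserving distribution over deterministic instances, then the mixture is value preserving for $\nu$ over deterministic instances, after which Corollary \ref{metacor} applies.

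The main obstacle I anticipate is the measure-theoretic bookkeeping in the coupling, specifically justifying that $s_t$ may be drawn independently of $\mc{A}$'s adaptive choices. The subtlety is that $\mc{A}$ is adaptive and its decision of where (and whether) to match $t$ can depend on the entire history of revealed successes and failures; one must argue that sampling the arrival-side coin $s_t$ \emph{before} knowing the offered resource does not alter the joint distribution of outcomes. This works precisely because $\mc{A}$ is non-anticipative (it does not see $s_t$ until after committing the edge) and because $t$ is matched at most once, so the single coin $s_t$ faithfully represents the arrival-side failure regardless of the target resource --- exactly the two properties highlighted in the Remarks following Lemma \ref{identical}. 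I would make sure the coupling is stated carefully enough that this independence is manifest, since that is where a careless argument could silently assume the algorithm's choices are independent of future realizations.
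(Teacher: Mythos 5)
Your proposal is correct and takes essentially the same approach as the paper: the paper's proof likewise writes each success indicator as a product $s_{it}=s_i\,s_t$ of independent resource-side and arrival-side Bernoulli coins, samples $\{s_t\}_{t\in T}$ up front to delete the arrivals with $s_t=0$, and appeals to the same coupling argument as in Lemma \ref{identical} to conclude value preservation. Your write-up merely spells out details the paper leaves implicit (each arrival consumes its coin $s_t$ at most once, and the residual success probability of a surviving edge is exactly $p_i$), and your closing remark about composing this reduction with a later result for the $p_{it}=p_i$ case is exactly how the paper proceeds, via Lemma \ref{bmatch} and Theorem \ref{decomp}.
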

\begin{proof}
	Let $\nu$ be an arbitrary instance of stochastic rewards with graph $G=(I,T,E)$, vertex weights $\{r_i\}_{i\in I}$, and decomposable probabilities. For an algorithm $\mathcal{A}$ that matches each arrival at most once, we can treat the Bernoulli random variable $s_{it}$ that indicates success/failure of an edge $(i,t)$ as a product of two independent Bernoulli random variables, i.e., 
	\[s_{it}=s^t_i\, s_t,\] 
	here $s^t_i$ is 1 w.p.\ $p_i$, $s_t$ is 1 w.p.\ $p_t$. Note that random variable $s_t$ correlates the stochastic rewards of edges incident on $t$ but this is without loss of generality (w.l.o.g.), for any algorithm that matches $t$ to at most one resource. % Now, given an instance $\nu$ with decomposable probabilities, we generate a value preserving distribution over instances where $p_{it}=p_i\,\, \forall (i,t)\in E*$ as follows. 
	Now, consider the Bernoulli random variables $\{s_t\}_{t\in T}$. % such that $s_t=1$ w.p.\ $p_t$. 
	Given a realization of these random variables, by removing all arrivals where $s_t=0$, we have an instance $u$ of stochastic rewards on the remaining set of arrivals $T^*$ and remaining edge set $E^*$ such that $p_{it}=p_i\,\, \forall (i,t)\in E^*$. The randomness in $\{s_t\}_{t\in T}$ induces a distribution over such instances. Similarly to the proof of Lemma \ref{identical}, this transformation preserves the expected total reward of non-anticipative algorithms. % Similar to the proof of Lemma \ref{identical}
	%	For any non-anticipate algorithm $\mc{A}$, the expected total reward $\mathbb{E}_{u\sim \mc{D}_\nu}[\mc{A}(u)]$ 
\end{proof}
%Now, to show the existence of a value preserving distribution over deterministic reward instances it suffices to show that we first show that 

Consider an instance $\nu$ of stochastic rewards with probabilities $p_{it}=p_i\,\, \forall (i,t)\in E$. %The probability that a chosen edge succeeds depends on the resource and 
Due to different probabilities for different resources, sampling a single Bernoulli random variable for each arrival does not suffice to obtain a (value preserving) distribution over deterministic instances. Another possibility is to use a single Bernoulli random variable $s_i$ to determine if a match to resource $i$ succeeds. However, a resource may be matched numerous times before a match is successful. Therefore, we need multiple independent samples of $s_i$ to capture the stochastic reward from a single resource. % so, again, a single Bernoulli random variable does not suffice. 
In fact, we show that for every $\nu$, there %Therefore, in settings where all edges incident on some resource $i$ have the same probability $p_i$, we  
%Next, we state a result from \cite{stochrew} that implies the 
exists a value preserving distribution over instances of $b$-matching with deterministic rewards (where resources may be matched multiple times). To show this we first consider an alternative way to evaluate the problem objective. 

\textbf{Alternative way to compute the objective:} Given an instance $\nu^*$ of stochastic rewards (with arbitrary probabilities) and a non-anticipative algorithm $\mc{A}$, for every $(i,t)\in E$, let $X_{it}$ be a Bernoulli random variable that is 1 if $\mc{A}$ matches $i$ to $t$ and the match succeeds. Let $Y_{it}$ be a Bernoulli random variable that is 1 if $(i,t)$ is included in the matching (the match may fail).  Let $\mbb{E}[X_{it}]$ and $\mbb{E}[Y_{it}]$ denote the expected value of these random variables, here the expectation is with respect to stochastic rewards as well as any intrinsic randomness of $\mc{A}$. For a non-anticipative algorithm,
\[\mbb{E}[X_{it}]=p_{it}\mbb{E}[Y_{it}].\]
By linearity of expectation, we have
\begin{equation}\label{alternate}
	\mc{A}(\nu^*)=\sum_{(i,t)\in E} r_i\mbb{E}[X_{it}]=\sum_{(i,t)\in E} r_ip_{it}\mbb{E}[Y_{it}].
\end{equation}
The RHS above gives an alternate way to evaluate the objective where \emph{we get a reward equal to the expected reward of the edge whenever we make a match (regardless of success/failure)}. In this interpretation, the rewards are deterministic but the successes/failures of resources are still stochastic.
%In particular, Lemma 7 in \cite{stochrew} can be used to show this claim.  This result follows from an alternate viewpoint that was first proposed by \cite{deb}. For the sake of completeness we include a simple proof here. %for all instances of stochastic rewards with probabilities $p_{it}=p_i\,\, (i,t)\in E$. We explain the main idea with a simple example. Suppose that we match 

%\textbf{Online Matching with Stochastic Consumption:} Given an instance $\nu$ of stochastic rewards, suppose that every time we match an arrival, we let 
%\textbf{Threshold Viewpoint (Section 4.1 in \cite{stochrew})}: Independently pre-sample numbers $u_i\in U[0,1]$ for every $i\in I$. Let $\tau_i$ denote the smallest integer such that $1-(1-p_i)^{\tau_i}>u_i$. %Then, we make a common list for \alg\ and \opt\ such that $q_{it}$ is drawn from this list using separate pointers for both. 
%	The first $\tau_i-1$ entries are 0 and the $\tau_i$-th entry is 1. %%
%
%\begin{lemma}[Lemma 7 in \cite{stochrew}]
%	Consider a non-anticipative and algorithm $\mc{A}$. Let $\mc{A}(\omega)$ denote the matching output by $\mc{A}$ on sample path $\omega$. Similarly, let $\mc{A}(u)$ denote the matching output by $\mc{A}$ on
%	sample path $u^n$
%	in the threshold viewpoint. For every matching $M$, we have $P(\mc{A}(\omega)=M) =
%	P(\mc{A}(u^n) =M)$.
%\end{lemma}

In the following lemma, we use a generalization of the notion of value preserving distributions over instances of $b$-matching with unknown budgets. The notion of value preservation is well defined only for algorithms that work for the stochastic rewards problem as well as the $b$-matching with unknown budgets problem. %We focus on %a non-anticipative algorithm for stochastic rewards 
Note that \alg\ is one such algorithm. % as it is also well defined over instances of $b$-matching with unknown budgets. %However, this notion is not well defined for non-anticipate algorithms for stochastic rewards that are not defined instance of $b$-matching. 
\begin{lemma}\label{bmatch}
	There exists a value preserving distribution over instances of $b$-matching with unknown budgets for each instance of the stochastic rewards problem where all edges incident on the same resource have identical success probability. %with probabilities $p_{it}=p_i$. %\,\, \forall (i,t)\in E$.
\end{lemma}
\begin{proof}
	Let $\nu$ be an arbitrary instance of stochastic rewards with graph $G=(I,T,E)$, vertex weights $\{r_i\}_{i\in I}$, and success probability $p_i$ for every edge incident on $i\in I$. Consider independent Bernoulli random variables $\{s_i\}_{i\in I}$ such that $s_i=1$ w.p.\ $p_i$. For every $i$, we \emph{sequentially} draw $m$ ($=|T|$) independent samples from $s_i$ and let $k_i$ be the index of the first sample that has value 1, with $k_i$ set to $m+1$ when all the samples are 0. The probability that $k_i=k$ is $p_i(1-p_i)^{k-1}$. 
	
	Using the values of these random variables we can generate a sample of $\nu$ as follows. For any given resource $i$, when $i$ is matched for the $k$-th time, the match succeeds if the $k$-th independent sample of $s_i$ equals 1 and fails otherwise. On this sample path, $i$ is successfully matched if it is matched to $k_i$ arrivals. Given a realization of $\{k_i\}_{i\in I}$, consider the instance of $b$-matching with resources $I$, arrivals $T$ (in the same order), edge set $E$, vertex weights $r_ip_i\,\, \forall i\in I$ and budgets $b_i=k_i\,\, \forall i\in I$. The randomness in $\{k_i\}_{i\in I}$ makes the budgets initially unknown to an algorithm and induces a distribution over instances of the $b$-matching problem. %A non-anticipative algorithm $\mc{A}$ can be defined on these instances through its actions on $\nu$. 
	Using the alternative way to compute the objective for stochastic rewards (recall \eqref{alternate}), we have that the transformation is value preserving.
	
		\hfill\end{proof}

\begin{theorem}\label{decomp}
	\alg\ is $(1-1/e)$ competitive for the setting of stochastic rewards with decomposable probabilities.
	%	There exists a value preserving distribution over instances of $b$-matching with unknown budgets for each instance of the stochastic rewards problem with decomposable probabilities.
\end{theorem}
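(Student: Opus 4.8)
The plan is to chain together the value preserving reductions already established and then invoke the $b$-matching guarantee of Lemma \ref{fact2}. Fix an arbitrary instance $\nu$ of stochastic rewards with decomposable probabilities $p_{it}=p_ip_t$. By Lemma \ref{interim} there is a value preserving distribution $\mc{D}_\nu$ supported on instances with edge probabilities of the form $p_{it}=p_i$; by Lemma \ref{bmatch}, each such instance $u$ in the support in turn admits a value preserving distribution $\mc{D}_u$ supported on instances of $b$-matching with unknown budgets. I would compose these two steps: let $\mc{D}^*_\nu$ be the distribution obtained by first drawing $u\sim\mc{D}_\nu$ and then $w\sim\mc{D}_u$. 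The support of $\mc{D}^*_\nu$ lies entirely in the class of $b$-matching instances with unknown budgets.

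The first thing to verify is that composition preserves value preservation. For any non-anticipative $\mc{A}$ (defined on both stochastic rewards and $b$-matching with unknown budgets, as \alg\ and \opt\ both are), iterated expectation gives
\[\mc{A}(\nu)=\mbb{E}_{u\sim\mc{D}_\nu}[\mc{A}(u)]=\mbb{E}_{u\sim\mc{D}_\nu}\big[\mbb{E}_{w\sim\mc{D}_u}[\mc{A}(w)]\big]=\mbb{E}_{w\sim\mc{D}^*_\nu}[\mc{A}(w)],\]
so $\mc{D}^*_\nu$ is value preserving. This is the routine step. I would then apply the reduction of Theorem \ref{meta}: although that theorem is phrased for distributions over stochastic rewards instances, its proof uses nothing beyond non-anticipativeness of \alg\ and \opt\ together with value preservation, so the identical argument applies to $\mc{D}^*_\nu$ and yields
\[\inf_{\nu}\frac{\alg(\nu)}{\opt(\nu)}\;\geq\;\inf_{w\in supp(\mc{D}^*_\nu)}\frac{\alg(w)}{\opt(w)}.\]
It then remains to lower bound the ratio on each $b$-matching instance $w$ in the support.

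This last step is where the main care is needed, and it is a benchmark-reconciliation issue rather than a computation. Under value preservation, \opt\ maps to the non-anticipative offline on $w$, i.e.\ an offline optimum that does not learn the budget $b_i=k_i$ of a resource until it is exhausted. Lemma \ref{fact2}, however, bounds \alg\ against the \emph{stronger} offline benchmark that knows all budgets in advance. Because knowledge of the budgets can only increase the offline value, we have $\opt(w)\le \opt_{\mathrm{known}}(w)$, and hence
\[\alg(w)\;\geq\;(1-1/e)\,\opt_{\mathrm{known}}(w)\;\geq\;(1-1/e)\,\opt(w)\]
for every $w$ in the support. Combining this with the displayed inequality above gives $\alg(\nu)\geq(1-1/e)\opt(\nu)$ for all decomposable $\nu$, which is the claim.

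The only genuine subtlety, and the step I would be most careful to get right, is precisely this benchmark mismatch: the reduction produces the weaker (unknown-budget) non-anticipative offline as the benchmark for \opt, while Lemma \ref{fact2} controls \alg\ against the stronger (known-budget) offline. Fortunately the inequality points the favorable way, so the guarantee against the stronger benchmark transfers directly to the weaker one and no loss in the ratio is incurred.
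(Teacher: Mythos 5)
Your proposal is correct in substance and follows essentially the same route as the paper: the paper applies the two reductions sequentially (Theorem \ref{meta} with Lemma \ref{interim}, then a separate argument built on Lemma \ref{bmatch}) rather than composing them into a single distribution, a cosmetic difference, and both arguments close with Lemma \ref{fact2} plus the observation that knowing budgets can only help the offline benchmark. The one step where you claim more than can be justified is the assertion that the composed distribution is value preserving \emph{for \opt}, i.e., $\opt(\nu)=\mbb{E}_{w\sim\mc{D}^*_\nu}[\opt(w)]$. Value preservation is a statement about a \emph{fixed} non-anticipative algorithm run on both sides, whereas \opt\ is a per-instance optimum: applying value preservation to the optimal non-anticipative algorithm $\mc{A}^*$ for $\nu$ gives only $\opt(\nu)=\mbb{E}_{w}[\mc{A}^*(w)]\leq \mbb{E}_{w}[\opt(w)]$, and the reverse direction would require gluing the per-instance optimal algorithms for the various $w$ into a single non-anticipative algorithm for $\nu$, which is not possible in general since an algorithm running on $\nu$ does not know which $w$ it is facing. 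Relatedly, your ``weaker benchmark'' $\opt(w)$ is not actually weaker: for a \emph{fixed} instance $w$, the supremum over budget-oblivious offline algorithms is attained by an algorithm hard-coded for that instance, so it equals the known-budget optimum $\opt_{\mathrm{known}}(w)$ --- information restrictions have no bite on a single instance. This is precisely why the paper never claims value preservation for \opt\ on the $b$-matching side (it explicitly notes that \opt\ ``may not be a well defined algorithm for $b$-matching'') and instead uses only the one-sided bound $\opt(\nu)\leq\mbb{E}_{u\sim\mc{D}_\nu}[\opt^*(u)]$ against the known-budget optimum. Fortunately, that inequality is the only direction your chain actually uses, so after replacing your claimed equality with the justified ``$\leq$'' your argument goes through and coincides with the paper's proof.
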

\begin{proof}%[Proof of Lemma \ref{decomp}.]
	Using Theorem \ref{meta}, it suffices to analyze the competitive ratio of \alg\ on instances of stochastic rewards with edge probabilities $p_{it}=p_i\,\, \forall (i,t)\in E$. We claim that the competitive ration of \alg\ on such instances is lower bounded by the competitive ratio of \alg\ on instances of $b$-matching with unknown budgets. The proof is very similar to the proof of Theorem \ref{meta} with a minor change due to the fact that \opt\ for the stochastic rewards instance may not be a well defined algorithm for $b$-matching. Given an instance $\nu$ of stochastic rewards with probabilities $p_{it}=p_i$, from Lemma \ref{bmatch} we have a value preserving distribution $\mc{D}_\nu$ over instances of $b$-matching. Let $u$ denote an instance in $supp(\mc{D}_\nu)$. Let $\opt^*(u)$ denote the optimal offline solution for instance $u$ of the $b$-matching problem when the budgets are apriori known. Since $\opt^*$ has additional information (knows the budgets), we have
	\[\opt(\nu)\leq \mbb{E}_{u\sim \mc{D}_\nu}[\opt^*(u)],\]
	Now,
	%combining Lemma \ref{interim} with Lemma \ref{bmatch} gives us a value preserving distribution over $b$-matching with unknown budgets.	
	%using the fact that \alg\ and \opt\ are non-anticipative we have,
	%	\[\mbb{E}_{u\sim \mc{D}_\nu}[\alg(u)]\quad \text {and }\quad \mbb{E}_{u\sim \mc{D}_\nu}[\opt(u)].\]
	%	Using this, we have %or every $\nu\in\mc{V}$,
	\[\frac{\alg(\nu)}{\opt(\nu)}\geq \frac{\mbb{E}_{u\sim \mc{D}_\nu}[\alg(u)]}{\mbb{E}_{u\sim \mc{D}_\nu}[\opt^*(u)]} \qquad \forall \nu\in\mc{V}.\]
	From Lemma \ref{fact2}, we have that $\alg(u)\geq (1-1/e)\opt^*(u)$ for every instance $u$ of the $b$-matching with unknown budgets problem. Plugging this into the equality above proves the claim,
	%	From Fact \ref{determ}, for every instance $u$ of the deterministic rewards problem
	\[\frac{\mbb{E}_{u\sim \mc{D}_\nu}[\alg(u)]}{\mbb{E}_{u\sim \mc{D}_\nu}[\opt^*(u)]}\geq \frac{\mbb{E}_{u\sim \mc{D}_\nu}[(1-1/e)\opt^*(u)]}{\mbb{E}_{u\sim \mc{D}_\nu}[\opt^*(u)]}=(1-1/e) \qquad \forall \nu\in\mc{V}.\]
\hfill\end{proof}

\subsection{Correlated Stochastic Rewards} \hfill\\

\noindent Our discussion so far has focused on independent stochastic rewards. We now consider a more general setting where rewards are not necessarily independent (can be correlated) and there is a joint distribution over the rewards. The joint distribution is known to offline but online algorithm only knows the marginal success probabilities $p_{it}\,\, \forall (i,t)\in E$. To the best of our knowledge, no previous result is known for the problem under any (non-trivial) form of correlation. The simplicity of our approach reveals that \alg\ is $(1-1/e)$ competitive for decomposable (marginal) edge probabilities even when the rewards have certain types of correlations. 

% In fact, with some small changes the previous proof also works in a more general setting where probabilities of all edges incident on a given arrival are identical, i.e., $p_{it}=p_t\,\, \forall i\in N(t),\, \forall t\in T$.  In fact, the rewards do not need to be independent and the result holds under any joint distribution of rewards where,
\textbf{Time correlated rewards:} Recall that when the rewards are independent and edge probabilities are decomposable, we can view the Bernoulli random variable $s_{it}$ that indicates success/failure of an edge $(i,t)$ as a product of two independent Bernoulli random variables, i.e., 
\[s_{it}=s^t_i\, s_t,\] 
here $s^t_i$ is 1 w.p.\ $p_i$, $s_t$ is 1 w.p.\ $p_t$. Random variable $s_t$ correlates the stochastic rewards of edges incident on $t$ but this is w.l.o.g., for any algorithm that matches $t$ to at most once resource.  With this viewpoint in mind, consider a more general setting where the random variables $\{s_t\}_{t\in T}$ are not necessarily independent and their values come from an arbitrary joint distribution $\mc{S}_T$ over $\{0,1\}^{m}$. %Therefore, 
%In this general model, an edge $(i,t)$ succeeds if $s^t_is_t=1$, here $s^t_i$ is an independent Bernoulli random variable with success probability $p_i$, $s_t$ is a random variable with marginal success probability $p_t$ and $\{s_t\}_{t\in T}$ has a joint distribution $\mc{S}_T$. 
We call this the setting of time correlated rewards. % and claim the following result.
\begin{theorem}
	For the setting of time correlated rewards, \alg\ is $(1-1/e)$ competitive against the non-anticipative offline benchmark that knows the joint distribution $\mc{S}_T$.
\end{theorem}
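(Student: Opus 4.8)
The plan is to mirror the decomposable-probabilities argument, reducing a time correlated instance to a distribution over ordinary (independent) stochastic rewards instances with edge probabilities $p_{it}=p_i$, and then invoke the per-instance guarantee already proved for such instances. Given an instance $\nu$ of time correlated rewards, I would define $\mc{D}_\nu$ exactly as in Lemma \ref{interim}: draw the whole vector $\{s_t\}_{t\in T}$ from the joint distribution $\mc{S}_T$, delete every arrival with $s_t=0$, and keep the surviving arrivals $T^*$ together with their edges. On a surviving arrival an edge $(i,t)$ succeeds iff $s_i=1$, so each realized instance $u$ is a standard stochastic rewards instance with $p_{it}=p_i$ and independent $\{s_i\}$. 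The key observation enabling everything is that the coupling in Lemma \ref{interim} (and Lemma \ref{identical}) never used independence of $\{s_t\}$: it only used that an arrival with $s_t=0$ fails regardless of the resource it is offered, and that a failed match leaves the algorithm's state unchanged. Hence the reduction is a statement purely about the residual randomness $\{s_i\}$ and carries over verbatim to correlated $\{s_t\}$.

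For the \alg\ side I would establish the value-preserving identity $\alg(\nu)=\mbb{E}_{u\sim\mc{D}_\nu}[\alg(u)]$ by the same sample-path coupling. Fixing the seeds $\{y_i\}$ and a realization of $(\{s_t\},\{s_i\})$, the common factor $p_t$ does not affect the selection rule in Algorithm \ref{rank}, since $\argmax_{i} p_ip_t r_i(1-e^{y_i-1})=\argmax_{i} p_i r_i(1-e^{y_i-1})$; thus on every surviving arrival \alg\ makes exactly the decision it would make on $u$, while on every deleted arrival it makes a guaranteed-to-fail offer that does not change its state. The two trajectories therefore agree on $T^*$ and collect identical rewards, and taking expectations over $(\{s_t\},\{s_i\})$ and the seeds yields the identity.

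The main obstacle is the \opt\ side, because \opt\ knows $\mc{S}_T$ and could in principle probe deleted arrivals, learn something about the correlated realization, and exploit it on the surviving arrivals, so it is not obvious that deleting those arrivals cannot hurt the benchmark. I would sidestep this exactly as in the proof of Theorem \ref{decomp}, by comparing against a strictly more informed offline. Reveal the entire vector $\{s_t\}$ to the offline in advance; this can only increase its value, so $\opt(\nu)\le\widetilde{\opt}(\nu)$, where $\widetilde{\opt}(\nu)$ is the value of the offline that is told $\{s_t\}$. Once $\{s_t\}$ is known, every deleted arrival is known to be dead, so probing it yields no new information and is never useful; the informed offline's problem collapses to matching the live arrivals $T^*$ against the independent rewards $\{s_i\}$, which is precisely the instance $u$. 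Hence $\widetilde{\opt}(\nu)=\mbb{E}_{u\sim\mc{D}_\nu}[\opt(u)]$, so revealing $\{s_t\}$ simultaneously upper-bounds the benchmark and dissolves the correlation, giving $\opt(\nu)\le\mbb{E}_{u}[\opt(u)]$.

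Finally, each $u$ is a stochastic rewards instance with $p_{it}=p_i$, for which Theorem \ref{decomp} (via Lemmas \ref{bmatch} and \ref{fact2}) already gives the per-instance bound $\alg(u)\ge(1-1/e)\,\opt(u)$. Plugging the \alg\ identity and the \opt\ upper bound into the ratio, as in Theorem \ref{meta}, gives
\[\frac{\alg(\nu)}{\opt(\nu)}\ge\frac{\mbb{E}_{u}[\alg(u)]}{\mbb{E}_{u}[\opt(u)]}\ge\frac{\mbb{E}_{u}[(1-1/e)\,\opt(u)]}{\mbb{E}_{u}[\opt(u)]}=1-\tfrac{1}{e},\]
which is the desired guarantee.
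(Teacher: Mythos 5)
Your proposal is correct and follows essentially the same route as the paper: draw $\{s_t\}_{t\in T}$ from $\mc{S}_T$, delete the arrivals with $s_t=0$ to obtain a value preserving distribution over stochastic rewards instances with $p_{it}=p_i$, and then invoke the machinery of Theorem \ref{decomp} (the $b$-matching reduction and Lemma \ref{fact2}). The only difference is one of rigor: your explicit handling of the benchmark side---upper-bounding $\opt(\nu)$ by an informed offline that is handed $\{s_t\}$ in advance, so that probing dead arrivals is provably useless---spells out a step the paper compresses into its blanket claim that the distribution is value preserving for all non-anticipative algorithms.
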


Similar to the proof of Theorem \ref{decomp}, the result above follows from the existence of a value preserving distribution over instances of stochastic rewards with probabilities $p_{it}=p_i\,\, \forall (i,t)\in E$. % over $b$-matching with unknown budget.
% \begin{equation}\label{const}
	% 	\mathbb{P}(\onee(i,t)=1\mid \omega_{-t})=\mathbb{P}(\onee(j,t)=1\mid \omega_{-t})\quad \forall i,j\in N(t), t\in T, 
	% \end{equation}
% i.e.,  on every sample path of a non-anticipative algorithm, the probability of success is identical for all edges incident on any given arrival $t$.
%  
% In the time-coupled version of this setting, the success probability of an edge now depends on the sample path. Consider an arrival $t$ matched by a non-anticipative algorithm and let $\omega_{-t}$ denote the sample path realized prior to matching $t$. If $t$ is matched to an available neighbor, the match succeeds with probability $\mathbb{P}(\onee(i,t)=1\mid \omega_{-t})$ regardless of the resource that is matched to $t$.
\begin{lemma}
	For each instance of time correlated rewards there exists a value preserving distribution over instances of stochastic rewards with probabilities $p_{it}=p_i\,\, \forall (i,t)\in E$.
	% 	\alg\ is $(1-1/e)$ competitive against non-anticipative offline for \iden\ with correlated stochastic rewards subject to condition \eqref{const}.
\end{lemma}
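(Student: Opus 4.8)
The plan is to mirror the reduction in the proof of Lemma~\ref{interim} almost verbatim, the key observation being that nothing in that argument used the mutual independence of the arrival-side coins $\{s_t\}_{t\in T}$. First I would record the decomposition already available in this setting: a match of edge $(i,t)$ succeeds iff $s_t=1$ and the resource-side attempt on $i$ succeeds, where $s_t$ equals $1$ with probability $p_t$ and the resource-side attempts are independent across attempts with success probability $p_i$. This decomposition is legitimate because each arrival is matched at most once, so the single coin $s_t$ determines the outcome of whatever match $t$ receives, irrespective of the resource; the only distinction from the independent decomposable case of Lemma~\ref{interim} is that $(s_t)_{t\in T}$ is now drawn from the joint law $\mc{S}_T$ rather than from a product distribution, while $(s_t)_{t\in T}$ remains independent of the resource-side randomness.

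Given this, I would define $\mc{D}_\nu$ exactly as in Lemma~\ref{interim}: draw $(s_t)_{t\in T}\sim\mc{S}_T$, delete every arrival with $s_t=0$, and retain the surviving arrivals $T^*=\{t\mid s_t=1\}$ in their original order together with their incident edges $E^*$. On the resulting instance $u$, a match of $(i,t)$ with $t\in T^*$ succeeds iff the resource-side attempt on $i$ succeeds, so $u$ is a genuine instance of stochastic rewards with $p_{it}=p_i$ for $(i,t)\in E^*$. The point that must be checked here is precisely that $u$ is such an instance: this holds because conditioning on $(s_t)_{t\in T}$ strips away only the arrival-side randomness and leaves the (independent, per-attempt, probability-$p_i$) resource-side randomness untouched. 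In other words, the correlation carried by $\mc{S}_T$ is entirely consumed in choosing the random subset $T^*$ and does not survive into the residual instance.

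Value preservation then follows from the same pathwise coupling used in Lemmas~\ref{identical} and~\ref{interim}. Fixing the intrinsic seed of an arbitrary non-anticipative $\mc{A}$ and the resource-side randomness, I would run $\mc{A}$ on $\nu$ under the rule that a match of $(i,t)$ succeeds iff $s_t=1$ and the resource-side attempt succeeds. For each realization of $(s_t)_{t\in T}$ a deleted arrival ($s_t=0$) yields no reward and consumes no resource, so, exactly as in Lemma~\ref{interim}, the total reward on this sample path of $\nu$ coincides with the total reward on the corresponding $u$. Averaging over $(s_t)_{t\in T}\sim\mc{S}_T$ together with the resource-side and intrinsic randomness gives $\mc{A}(\nu)=\mbb{E}_{u\sim\mc{D}_\nu}[\mc{A}(u)]$, which is value preservation.

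The one place where I must be careful---and the sole respect in which the time-correlated case departs from Lemma~\ref{interim}---is that the pathwise reward identity is established separately for each fixed realization of $(s_t)_{t\in T}$, so it is wholly insensitive to the joint law of these coins; the coupling relies only on the independence of $(s_t)_{t\in T}$ from the resource-side randomness and on each arrival being matched at most once. Hence swapping the product distribution for the arbitrary joint law $\mc{S}_T$ leaves the coupling, and therefore value preservation, intact, which I expect to be the crux rather than any new calculation. Feeding the resulting distribution into Lemma~\ref{bmatch} and repeating the argument of Theorem~\ref{decomp}---with the benchmark's knowledge of $\mc{S}_T$ absorbed just as its knowledge of budgets was there---then yields the stated $(1-1/e)$ guarantee for time correlated rewards.
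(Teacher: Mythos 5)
Your proposal is correct and follows essentially the same route as the paper: sample $\{s_t\}_{t\in T}\sim\mc{S}_T$, delete the arrivals with $s_t=0$, observe that the residual (partial) sample path is an instance of stochastic rewards with $p_{it}=p_i$, and note that the coupling argument of Lemmas~\ref{identical} and~\ref{interim} is performed per realization of $\{s_t\}_{t\in T}$ and hence never uses independence across arrivals, only the independence of the arrival-side coins from the resource-side randomness and the fact that each arrival is matched at most once. The only cosmetic difference is that you also fix the resource-side randomness pathwise before averaging, whereas the paper keeps $\{s_i\}_{i\in I}$ random and compares conditional expectations, which is an immaterial distinction.
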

\begin{proof}
	Let $\nu$ denote an instance of the time correlated rewards problem with graph $G=(I,T,E)$. %We show that there exists a value preserving distribution over instances of (independent) stochastic rewards with edge probabilities $p_{it}=p_i$. %Then, a composition of this claim with the result in Lemma \ref{bmatch} gives us the desired. % proves the main claim.
	Similar to the proof of Lemma \ref{identical}, consider a random sample of values of $\{s_t\}_{t\in T}$ drawn according to distribution $\mc{S}_T$. Given a non-anticipative algorithm $\mc{A}$, we use this sample to generate a (partial) sample path of $\nu$ as follows. When $\mc{A}$ includes an edge $(i,t)$ in the matching, the edge succeeds if $s_t=1$ and fails otherwise. This is a partial sample path of $\nu$ since $\{s^t_i\}_{i\in I, t\in T}$ are still random. By ignoring the arrivals where $s_t=0$, we have an instance of stochastic rewards with a reduced set of arrivals $T^*$, reduced edge set $E^*$, and edge probabilities $p_{it}=p_i\,\, \forall (i,t)\in E^*$. Taking expectation over the randomness in $\{s^t_i\}_{i\in I, t\in T}$, the expected total reward of $\mc{A}$ on this instance is the same as the expected total reward of $\mc{A}$ on the partial sample path of $\nu$.
\end{proof}

\subsection{Upper Bound for General Probabilities}
\begin{theorem}\label{upper}
	\alg\ is at most 0.624 competitive $(<1-1/e)$ for the general setting of online matching with stochastic rewards.
\end{theorem}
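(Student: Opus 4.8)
The plan is to exhibit a single explicit instance $\nu$ of stochastic rewards with non-decomposable probabilities for which $\alg(\nu)/\opt(\nu)\le 0.624$; since we only need an upper bound, one instance (or a limiting family) suffices and we need not identify the worst case. The guiding principle is to break exactly the structural property that made the reductions of the previous sections work. For decomposable probabilities $p_{it}=p_ip_t$, the common arrival factor $p_t$ cancels inside the $\arg\max$ defining \alg, so the perturbed ranking of the resources is the \emph{same} at every arrival; this consistency is what let us push each realization down to a deterministic or $b$-matching instance. With general probabilities a resource can be ``strong'' (high $p_{it}$) for some arrivals and ``weak'' for others, so its effective rank changes from arrival to arrival and \alg\ can no longer be emulated by a single consistent priority order. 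The construction should make this inconsistency costly.

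Concretely, I would build the instance from high-probability \emph{primary} edges and low-probability \emph{backup} edges arranged on a triangular (layered) backbone, the classical tight instance for Perturbed Greedy in the deterministic setting. The intended mechanism: with constant probability the random perturbations cause \alg\ to consume a primary resource on an early arrival, leaving some later arrival with only a low-probability backup edge; that backup fails with high probability and the arrival is wasted. The benchmark \opt, which knows all $p_{it}$ and may match in any order, instead reserves each scarce primary resource for the arrival that has no good alternative and schedules the sure matches first, so it avoids essentially all of this waste. Because the backbone already forces a $1/e$-type loss on \alg\ in the deterministic case, the stochastic backups need only degrade \alg's performance by a further small constant to drop the ratio from $1-1/e$ to below $0.624$.

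To turn this into a bound I would (i) compute $\alg(\nu)$ by fixing the perturbation seeds $\{y_i\}$, determining the induced matching on each joint realization of the Bernoulli successes, then integrating over $\{y_i\}\sim U[0,1]^{|I|}$ and summing over the success outcomes; and (ii) lower bound $\opt(\nu)$ by writing down one explicit non-anticipative offline schedule (match the sure/high-probability edges first, keep each primary resource for its unique ``needy'' arrival) and evaluating its expected reward. The ratio then becomes an explicit function of the instance parameters---the backup probabilities, any vertex weights, and the number of layers---which I would tune, possibly taking the number of layers to infinity, to reach the stated $0.624$.

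The main obstacle is step (i): \alg's trajectory couples the continuous perturbations with the discrete success/failure outcomes across many adaptive matching steps, so $\alg(\nu)$ is a high-dimensional integral-and-sum that must be evaluated or tightly upper bounded rather than read off from a clean recursion as in the decomposable case. A secondary difficulty is certifying a sufficiently strong lower bound on $\opt(\nu)$ via the proposed schedule, so that the reported ratio is an honest upper bound on the competitive ratio. Finally, since the target $0.624$ sits only slightly below $1-1/e$, the parameters must be chosen carefully: the stochastic backups have to hurt \alg\ strictly more than they hurt \opt, and quantifying this small constant gap is where the real work lies.
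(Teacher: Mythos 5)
Your proposal has a genuine gap, and in fact the core mechanism you describe appears to push the ratio in the wrong direction. In your construction the backup edges are, by design, inferior: they have low probability and (implicitly) low expected reward, so \alg\ uses them essentially only when no primary resource is available, i.e., on arrivals that in the bare deterministic triangular instance would receive \emph{nothing}. Relative to that deterministic backbone, the backups can therefore only \emph{add} value to \alg, while \opt\ (which already matches every arrival via primaries) gains nothing. Since the deterministic backbone itself cannot exhibit a ratio below $1-1/e$ (Perturbed Greedy is $(1-1/e)$-competitive on every deterministic instance, by Lemma \ref{fact1}), your instance as described tends to give a ratio at or above $1-1/e$, not below it. To break the $1-1/e$ barrier you need the stochasticity itself to damage \alg, and the key structural lever is the one your construction omits: an edge can have \emph{high weight and low probability}, so that its expected reward $p_{it}r_i$ competes inside \alg's $\arg\max$, and a failed match leaves the resource available to be matched again. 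This makes such a resource behave like a resource of \emph{unknown budget}, which is precisely what the paper exploits: its hard instance has $n$ unit-weight, probability-$1$ resources plus one resource of weight $1/p$ with edge probabilities $p\,w(t)$, and for $p\ll 1/n$ the execution of \alg\ on this instance couples, up to a $(1+O(np))$ factor in both $\alg$ and $\opt$ (Lemma \ref{gap}), with Perturbed Greedy on the known hard instance of Adwords with unknown budgets, whence the $0.624$ bound is inherited from Theorem \ref{unk} by taking $p=1/n^2$ and $n\to\infty$.

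A second, independent problem is that your plan defers exactly the step that constitutes the proof: you acknowledge that evaluating $\alg(\nu)$ (the coupled integral over perturbations and success realizations) is the ``main obstacle'' and that ``quantifying this small constant gap is where the real work lies.'' The paper's reduction is designed precisely to avoid that computation --- it never analyzes \alg's dynamics on the stochastic instance directly, but transfers the question to a previously established upper bound for Adwords with unknown budgets. Relatedly, the specific constant $0.624$ is not a tunable outcome of your construction; it is the constant from that cited Adwords result, so ``tuning parameters to reach $0.624$'' is not an available move without carrying out an analysis you have not supplied.
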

To construct an upper bound we give a hard instance of the problem where the expected objective value of \alg\ is strictly less than $(1-1/e)$. Our instance has small and large edge probabilties, and heterogeneous vertex weights.  

\textbf{Instance of Stochastic Rewards:} We have $n+1$ resources and $2n$ arrivals. Resources 1 through $n$ have an edge to all arrivals. Resource $n+1$ has edges only to the first $n$ arrivals. The probabilities of resource $n+1$'s edge to arrival $t$ is $p\,w(t)$ where $w(t)$ is a decreasing function of $t$.
\[w(t)=\frac{1-e^{\frac{t}{n+1}-1}}{1-e^{-1+\epsilon}},\]
here $\epsilon=0.133$ and $p$ is small enough so that $pw(1)<1$. The weight of vertex $n+1$ is $1/p$. Resources 1 though $n$ have a vertex weight of 1 and all edges incident on them have probability 1. 

This instance is inspired by the following hard instance for the related problem of Adwords with Unknown Budgets. % given in \cite{unknown}. %to show that a natural interpretation of \alg\ is strictly less than $(1-1/e)$ competitive. 

\textbf{Instance of Adwords with Unknown Budgets }\cite{unknown}: We have $n+1$ resources and $2n$ arrivals. Resources 1 through $n$ have an edge to all arrivals. Resource $n+1$ has edges to the first $n$ arrivals only. Arrival $t\in[n]$ bids 
\[w(t)=\frac{1-e^{\frac{t}{n+1}-1}}{1-e^{-1+\epsilon}},\]
(here $\epsilon=0.133$) on resource $n+1$. Resource $n+1$ has a total budget of $\sum_{t\in[n]} w(t)$. Resources 1 though $n$ have a budget of 1 and every arrival bids 1 on each of these resources. % weight of 1 and all edges incident on them have probability 1. 

%The stochastic rewards problem with small probabilities is a special case of Adwords with Unknown budgets. In general, this connection %allows one to translate lower bounds from the latter but 
%cannot be used to translate upper bounds.  However, it is possible to translate specific hard instances of the Adwords problem to the stochastic rewards setting and we show our upper bound using this idea. % result by proving such a translation. 

Let $\nu$ refer to the instance of stochastic rewards above and let $\nu^*$ refer to the instance of Adwords. Let $\alg^*$ denote the Perturbed Greedy algorithm for Adwords. This algorithm greedily matches each arrival to an available neighboring resource with the highest perturbed bid, here $b_{it} (1-e^{y_i-1})$ is the perturbed bid of resource $i$ on arrival $t$. In other words, $\alg^*$ is similar to $\alg$ with the edge probabilities replaced by bids. Resources become unavailable in $\alg^*$ when the budget runs out. Let $\alg(\nu)$ and $\alg^*(\nu^*)$ refer to the expected total reward of \alg\ on instance $\nu$ and $\alg^*$ on instance $\nu^*$ respectively.  Let $\opt^*(\nu^*)$ denote the optimal value of the offline solution for the Adwords instance $\nu^*$ and let $\opt(\nu)$ refer to the optimal expected value of the non-anticipative offline solution for $\nu$. We use the notation $O(np)$ to simplify notation and ignore absolute constants that do not depend on $n$ and $p$.
\begin{lemma}\label{gap}
	For the instances described above, when $p\ll \frac{1}{n}$, we have $1\leq \frac{\alg^*(\nu^*)}{\alg(\nu)}\leq (1+O(np))$ and $1\leq \frac{\opt^*(\nu^*)}{\opt(\nu)}\leq (1+O(np))$.
	%	The following statements are true.
	%	\begin{enumerate}[(i)]
		%		\item $\alg=\alg'$
		%	\end{enumerate}
\end{lemma}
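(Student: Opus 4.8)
The plan is to couple the stochastic instance $\nu$ with the Adwords instance $\nu^*$ through the shared perturbation seeds $\{y_i\}_{i\in I}$ used by $\alg$ and $\alg^*$, and to compare expected rewards through the alternative accounting in (\ref{alternate}). The starting point is that the coefficient $r_i p_{it}$ appearing in (\ref{alternate}) equals the Adwords bid $b_{it}$ edge-by-edge: for $i\le n$ we have $r_i p_{it}=1=b_{it}$, and for resource $n+1$ we have $r_{n+1}p_{(n+1)t}=(1/p)(pw(t))=w(t)=b_{(n+1)t}$. Hence $\alg$ and $\alg^*$ compute identical perturbed values $w(t)(1-e^{y_{n+1}-1})$ and $(1-e^{y_i-1})$ on the two instances, and (\ref{alternate}) lets me write $\alg(\nu)$ as the expected ``pseudo-reward'' that pays $w(t)$ for every match of $n+1$ to $t$ and $1$ for every match of a resource in $\{1,\dots,n\}$, exactly mirroring the deterministic reward collected by $\alg^*$ on $\nu^*$.

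The key structural fact I would establish first is that on every realization both $\alg$ and $\alg^*$ match all $n$ resources in $\{1,\dots,n\}$, so their contribution is exactly $n$ and the entire discrepancy is confined to resource $n+1$. This holds because the last $n$ arrivals $t=n+1,\dots,2n$ have edges only to $\{1,\dots,n\}$; if some $j\le n$ were left unmatched it would remain available through arrival $2n$, forcing each of these $n$ arrivals to consume a distinct resource of $\{1,\dots,n\}$, which exhausts all $n$ of them -- a contradiction. I would next verify that in $\nu^*$ resource $n+1$ is never budget-constrained: before any first-phase arrival $t\le n$ the residual budget is at least $\sum_{s\ge t}w(s)\ge w(t)$, so $\alg^*$ can always match $t$ to $n+1$ in full. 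Consequently, coupling on $\{y_i\}$ and letting each attempted match of $n+1$ in $\nu$ succeed according to independent coins, the two executions are \emph{identical up to the first successful match of $n+1$ in $\nu$}; call its absence the event $E_0$. A union bound gives $\Pr[E_0^c]\le \sum_t pw(t)=p\sum_t w(t)=O(np)$, since each attempt succeeds with probability $pw(t)\le pw(1)=O(p)$.

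Writing $W_\nu=\sum_t w(t)Y_{(n+1)t}$ and $W^*=\sum_{t\to n+1}w(t)$ for the $n+1$-contributions, the matches of $n+1$ in $\nu$ occur only up to its first success and coincide there with those of $\nu^*$, whereas $\nu^*$ keeps $n+1$ available and may match it further; hence $W_\nu\le W^*$ pathwise, giving $\alg(\nu)=n+\mbb{E}[W_\nu]\le n+\mbb{E}[W^*]=\alg^*(\nu^*)$, the lower bound $\alg^*(\nu^*)/\alg(\nu)\ge 1$. For the upper bound, $W^*-W_\nu$ is nonzero only on $E_0^c$ and is bounded by $\sum_t w(t)=O(n)$, so $\alg^*(\nu^*)-\alg(\nu)=\mbb{E}[(W^*-W_\nu)\mathbbm{1}_{E_0^c}]\le O(n)\Pr[E_0^c]=O(n^2p)$; since $\alg(\nu)\ge n$ this gives $\alg^*(\nu^*)/\alg(\nu)\le 1+O(np)$. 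For $\opt$ I would argue directly: the Adwords optimum is $\opt^*(\nu^*)=n+\sum_t w(t)$, obtained by filling $n+1$'s budget with the first $n$ arrivals and matching the last $n$ to $\{1,\dots,n\}$; and any non-anticipative offline strategy for $\nu$ has pseudo-reward at most $n$ from $\{1,\dots,n\}$ and at most $\sum_t w(t)$ from $n+1$ (as $\mbb{E}[Y_{(n+1)t}]\le 1$), so $\opt(\nu)\le\opt^*(\nu^*)$. Conversely, the explicit strategy that offers $n+1$ to arrivals $1,\dots,n$ until it succeeds and matches the remaining arrivals to $\{1,\dots,n\}$ collects $n+\sum_t w(t)\Pr[\text{no success before }t]\ge n+\sum_t w(t)(1-O(np))=\opt^*(\nu^*)-O(n^2p)$, and with $\opt(\nu)\ge n$ this yields $\opt^*(\nu^*)/\opt(\nu)\le 1+O(np)$.

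The step I expect to require the most care is the $\alg$ lower bound, and specifically the insight that makes it clean: because resources $1,\dots,n$ are \emph{always} fully matched, all discrepancy between the stochastic and Adwords executions collapses onto resource $n+1$, whose contribution is monotone under the coupling (staying available longer can only add non-negative $w(t)$ terms). This is exactly where the alternative objective (\ref{alternate}) is essential -- measured by actual rewards, a successful match of $n+1$ in $\nu$ pays the large value $1/p$ and would break the pathwise domination, whereas the pseudo-reward pays only $w(t)$ per attempt and restores it. The remaining routine points are checking the budget-feasibility of $n+1$ in $\nu^*$ and that $\alg(\nu),\opt(\nu)=\Theta(n)$, so that additive $O(n^2p)$ gaps translate into multiplicative $(1+O(np))$ factors.
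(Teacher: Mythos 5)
Your proof is correct and, for the algorithmic part, follows essentially the same route as the paper: couple $\alg$ on $\nu$ with $\alg^*$ on $\nu^*$ through shared seeds $\{y_i\}_{i\in I}$, use the pseudo-reward accounting of \eqref{alternate} to make the per-match rewards coincide edge-by-edge, observe the executions are identical until the first successful match of resource $n+1$ in $\nu$, bound that event's probability by $O(np)$, and convert the additive $O(n^2p)$ gap into a multiplicative $(1+O(np))$ factor using $\alg(\nu),\opt(\nu)\geq n$ and total reward $O(n)$. Where you add value is in making explicit several points the paper leaves implicit: the pigeonhole argument that resources $1,\dots,n$ are always fully matched in every execution, the verification that resource $n+1$ is never budget-constrained in $\nu^*$ (which is what licenses the claim of identical decisions), and the pathwise monotonicity $W_\nu\leq W^*$ under the coupling. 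Your treatment of the benchmark is also genuinely different and arguably cleaner: instead of the paper's coupling of the two offline optima (which is delicate, since $\opt$ for $\nu$ is an adaptive non-anticipative policy while $\opt^*$ is a static offline solution), you compute $\opt^*(\nu^*)=n+\sum_t w(t)$ exactly, cap $\opt(\nu)$ by the same quantity via $\mbb{E}[Y_{(n+1)t}]\leq 1$, and exhibit an explicit feasible policy for $\nu$ achieving $\opt^*(\nu^*)-O(n^2p)$; this two-sided sandwich avoids the coupling for $\opt$ entirely. One cosmetic note: your union bound $\sum_t p\,w(t)=O(np)$ is actually slightly more careful than the paper's stated bound $1-(1-p)^n$, since the success probabilities $p\,w(t)$ can exceed $p$ (as $w(1)>1$); both give $O(np)$, but yours needs no correction of constants.
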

\begin{proof}
	Consider a parallel execution of \alg\ on $\nu$ and $\alg^*$ on $\nu^*$ with the same values for the random variables $\{y_i\}_{i\in I}$. In the following, assume that resource $n+1$ is never matched successfully in $\alg$. Then, both algorithms make identical decisions, i.e., if $\alg$ matches $t$ to resource with index $i$ then so does $\alg^*$. Suppose that $\alg$ (and $\alg^*$) match arrival $t$ to resource $i$. Using the alternative way of computing total reward for $\nu$ (recall equation \eqref{alternate}), we have that \alg\ gets a deterministic reward equal to $1$ if $i\in[n]$ and a reward of $r_i\,p_{it}=w(t)$ if $i=n+1$. $\alg^*$ gets the same reward at $t$ and the two algorithms have identical total reward. Similarly, the optimal offline rewards are identical when resource $n+1$ is not successfully matched by the offline algorithm for $\nu$. 
	
	On sample paths where resource $n+1$ is successfully matched in $\nu$, $\alg^*$ and $\opt^*$ may have a higher total reward than their counterparts. Thus, $\alg^*(\nu^*)\geq \alg(\nu)$ and $\opt^*(\nu^*)\geq \opt(\nu)$. The probability that resource $n+1$ is successfully matched by any algorithm is at most $1-(1-p)^n$. For $p\ll \frac{1}{n}$, we have $(1-p)^n= 1-O(np)$. Notice that $\alg(\nu)\geq n$ and $\opt(\nu)\geq n$ since both have total reward at least $n$ on every sample path (resources 1 to $n$ are always matched). Further, the maximum possible reward in either setting is at most $n+\sum_{t\in[n]} w(t)=O(n)$. Therefore, 
	\[\alg(\nu)\leq \alg^*(\nu^*)\leq\alg(\nu)+(1-(1-p)^n)O(n)=(1+O(np))\alg(\nu).\] Similarly, $\opt(\nu)\leq \opt^*(\nu^*)\leq \opt(\nu)+(1-(1-p)^n)O(n)=(1+O(np))\opt(\nu)$.
\end{proof}
\begin{theorem}[Theorem 5 in \cite{unknown} (restated).]\label{unk}
	For the instance $\nu^*$ of the Adwords problem, we have $\lim_{n\to+\infty}\frac{\alg^*(\nu^*)}{\opt^*(\nu^*)}\leq 0.624$.
\end{theorem}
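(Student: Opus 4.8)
The plan is to evaluate $\opt^*(\nu^*)$ and $\alg^*(\nu^*)$ separately in the limit $n\to\infty$ and divide. For $\opt^*$, I would exhibit the optimal offline allocation directly: match each of the last $n$ arrivals to a distinct resource in $\{1,\dots,n\}$ (every such arrival bids $1$ on each of them, and there are exactly $n$ of them, each with budget $1$), collecting reward $n$; and match each of the first $n$ arrivals to resource $n+1$, using exactly its full budget $\sum_{t\in[n]} w(t)$ and collecting reward $\sum_{t\in[n]} w(t)$. No allocation can do better on either side, so $\opt^*(\nu^*)=n+\sum_{t\in[n]} w(t)$. A Riemann-sum estimate then gives $\frac1n\sum_{t\in[n]} w(t)\to \frac{1}{1-e^{-1+\epsilon}}\int_0^1(1-e^{x-1})\,dx=\frac{e^{-1}}{1-e^{-1+\epsilon}}$.

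\textbf{Reducing $\alg^*$ to a single resource.} Next I would observe that $\alg^*$ collects reward exactly $n$ from resources $1,\dots,n$ on \emph{every} sample path. The reason is that resource $n+1$ has no edge to the last $n$ arrivals, so each of those arrivals is matched, when possible, to an available resource in $\{1,\dots,n\}$; since there are $n$ such arrivals and at most $n$ of these resources can remain after the first batch, every resource in $\{1,\dots,n\}$ is matched by the end, each contributing reward $1$. Hence $\alg^*(\nu^*)=n+\mbb{E}[\rho]$, where $\rho$ is the random reward collected on resource $n+1$ during the first batch, and the entire problem reduces to upper-bounding $\mbb{E}[\rho]$.

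\textbf{The peeling dynamics for resource $n+1$.} To analyze $\rho$ I would condition on the single perturbation $q:=1-e^{y_{n+1}-1}$ of resource $n+1$ and pass to the mean-field limit in which the perturbations $1-e^{y_i-1}$ of resources $1,\dots,n$ fill out their limiting law. The key structural fact is that $\alg^*$ uses resources in $\{1,\dots,n\}$ in decreasing order of perturbed bid, so after a fraction $u$ of them has been used the best available bid is $\tau=1-e^{u-1}$, and the (normalized) arrival $x\in[0,1]$ is sent to resource $n+1$ exactly when $w(x)q\ge\tau$. Tracking $u(x)$ is then a short ODE with at most three phases: an initial phase where all arrivals go to $n+1$ (threshold frozen at $1-e^{-1}$), a boundary phase where $\tau=w(x)q$, and a phase where all arrivals go to $\{1,\dots,n\}$. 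Because $w(x)=(1-e^{x-1})/(1-e^{-1+\epsilon})$, the ratio $w(x)q/\tau(x)$ is constant along the all-to-$\{1,\dots,n\}$ trajectory $u(x)=x$, which shows cleanly that when $q<1-e^{-1+\epsilon}$ one has $w(x)q<\tau(x)$ for all $x$, so $\rho=0$; since $q\ge 1-e^{-1+\epsilon}$ is equivalent to $y_{n+1}\le\epsilon$, resource $n+1$ earns anything only with probability $\epsilon$. For $y_{n+1}\le\epsilon$ I would integrate $w(x)\,(1-u'(x))$ over the phases to obtain $\rho(q)$ and then integrate $\rho(q)$ against the (uniform) density of $y_{n+1}$ on $[0,\epsilon]$ to get $\overline\rho:=\lim_n\mbb{E}[\rho]/n$.

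\textbf{Combining, and the main obstacle.} Putting the pieces together yields $\frac{\alg^*(\nu^*)}{\opt^*(\nu^*)}\to \frac{1+\overline\rho}{\,1+e^{-1}/(1-e^{-1+\epsilon})\,}$, and the choice $\epsilon=0.133$ is precisely what calibrates this quantity to at most $0.624$. The main obstacle is the evaluation of $\overline\rho$: the crude bound $\rho\le\sum_t w(t)$ combined with $\Pr[\rho>0]=\epsilon$ only gives roughly $0.66$, so one genuinely needs the refined phase analysis of the peeling process to certify that the first-batch reward on resource $n+1$ is small enough. A secondary but necessary technical step is justifying the continuum limit, i.e.\ that the discrete peeling process concentrates around the deterministic trajectory $\tau=1-e^{u-1}$; this follows from standard concentration of the empirical distribution of $\{y_i\}_{i\in[n]}$ and of the count of matched resources, which is where the $\lim_{n\to\infty}$ in the statement enters.
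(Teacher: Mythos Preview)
The paper does not prove this statement: it is explicitly labeled ``Theorem 5 in \cite{unknown} (restated)'' and is invoked as an external result in the proof of Theorem~\ref{upper} with no argument given. So there is no proof in the paper to compare your proposal against; the paper simply imports the bound from prior work.

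As for your sketch on its own merits: the high-level structure is sound. The computation of $\opt^*(\nu^*)=n+\sum_{t\in[n]}w(t)$ and its Riemann limit is correct, the observation that resources $1,\dots,n$ are always fully matched (so $\alg^*(\nu^*)=n+\mbb{E}[\rho]$) is correct, and the threshold argument showing $\rho=0$ whenever $y_{n+1}>\epsilon$ is exactly right --- the constant ratio $w(x)q/\tau(x)=q/(1-e^{-1+\epsilon})$ along the trajectory $u(x)=x$ is the clean structural point. You also correctly identify that the crude bound $\mbb{E}[\rho]\le \epsilon\sum_t w(t)$ gives only $\approx 0.66$, so the phase analysis is genuinely needed.

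What is missing is precisely the part you flag as the main obstacle: you do not actually solve the phase dynamics for $y_{n+1}\le\epsilon$, compute $\rho(q)$, or integrate to obtain $\overline\rho$ and verify the numerical value $0.624$. Without that computation the argument is an outline, not a proof. If you want this to stand on its own you would need to (i) determine the switching time $x_0(q)$ at which the initial all-to-$(n{+}1)$ phase ends, (ii) work out whether and where a boundary phase $\tau=w(x)q$ occurs and solve the resulting equation for $u(x)$, (iii) integrate $w(x)(1-u'(x))$ explicitly, and (iv) average over $y_{n+1}\in[0,\epsilon]$. The concentration step you mention is routine, but the closed-form evaluation is where the constant $0.624$ actually comes from, and that is not present in your proposal.
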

\begin{proof}[Proof of Theorem \ref{upper}.]
	From Lemma \ref{gap} and Theorem \ref{unk}, we get
	\[\frac{\alg(\nu)}{\opt(\nu)}=\frac{\alg(\nu)}{\alg^*(\nu^*)}\frac{\opt^*(\nu^*)}{\opt(\nu)}\frac{\alg^*(\nu^*)}{\opt^*(\nu^*)}\leq 1\times (1+O(np))\times \frac{\alg^*(\nu^*)}{\opt^*(\nu^*)}.\]
	For $p=\frac{1}{n^2}$, we have, %in the limit when $n\to +\infty$, we have,
	$\lim_{n\to+\infty} \frac{\alg(\nu)}{\opt(\nu)}\leq 0.624,$ as desired.
\end{proof}
\section{Conclusion}
We consider the problem of online matching with stochastic rewards and show that from a competitive ratio standpoint the setting reduces to the simpler deterministic online matching when the probabilities are decomposable. The simplicity of our approach reveals a more general (tight) $(1-1/e)$ competitive ratio for the Perturbed Greedy algorithm in a setting where edge probabilities are decomposable and the arrival associated randomness ($s_t$) in the stochastic rewards may be arbitrarily correlated across time. Finally, we show that for the general stochastic rewards problem, Perturbed Greedy is strictly less than $(1-1/e)$ competitive. The key open question for this problem is to find an algorithm with the highest possible competitive ratio for the general setting. Our work also opens up the study of the even more general setting of correlated stochastic rewards, for which we give the first non-trivial competitive ratio results.

%{\color{red} references here}
\bibliographystyle{alpha} % outcomment this and next line in Case 1
\bibliography{new}
\end{document}